\theoremstyle{definition}
\newtheorem{theorem}{Theorem}
\newtheorem{prop}{Proposition}
\newtheorem{assumption}{Assumption}
\def\modelname{SAR }
\newcommand{\argmin}{\mathop{\rm arg~min}\limits}
\begin{document}

\bibliographystyle{econ-aea}

\title{Identification and Bayesian Inference for Synthetic Control Methods with Spillover Effects}
\author{Shosei Sakaguchi\thanks{Faculty of Economics, The University of Tokyo, 7-3-1 Hongo, Bunkyo-ku, Tokyo 113-0033, Japan. Email:$\ $sakaguchi@e.u-tokyo.ac.jp. } \ and\  Hayato Tagawa\thanks{Corresponding author at: Graduate School of Economics, The University of Tokyo, 7-3-1 Hongo, Bunkyo-ku, Tokyo 113-0033, Japan. Email:$\ $hayato-tagawa@g.ecc.u-tokyo.ac.jp.}}
\date{\today}

\maketitle
\vspace{-0.6cm}
\begin{abstract}
The synthetic control method (SCM) is widely used for causal inference with panel data, particularly when the number of treated units is small. It relies on the stable unit treatment value assumption (SUTVA), ruling out spillover effects. However, interventions often affect not only treated but also untreated units. This study proposes a novel panel data method that extends standard SCM to account for spillovers and estimate both treatment and spillover effects. The approach extends the SCM framework by incorporating a spatial autoregressive (SAR) panel data model that captures spillover patterns across units. We also develop a Bayesian inference procedure using horseshoe priors for regularization. We apply the proposed method to two empirical studies: (i) evaluating the effect of the California tobacco tax on cigarette consumption, and (ii) assessing the economic impact of the 2011 Sudan division on GDP per capita.
\bigskip \\
\textbf{Keywords:} Bayesian inference; national breakup; spatial autoregressive model; spatial panel data; SUTVA
\end{abstract}


\newpage

\newpage

\setstretch{1.4} 

\section{Introduction}

The synthetic control method (SCM) \citep{abadie2003economic,abadie2010synthetic} is a causal inference approach used to estimate treatment effects when the number of units undergoing intervention is very small, such as a single unit. 
This approach identifies treatment effects from panel data by substituting the counterfactual outcomes of the treated unit in the absence of treatment with weighted averages of outcomes of untreated units.
The method has been widely used in various fields, including political economy and marketing.
\citet{athey2017state} describe SCM as ``arguably the most important innovation in the policy evaluation literature in the last 15 years.'' 

The SCM is typically applied to country- or district-level data.
For instance, \cite{Abadie_et_al_2015} apply SCM to country-level data to estimate the economic impact of the 1990 German reunification, while \cite{bifulco2017using} apply it to school district data to evaluate the effect of an educational program.
In such contexts, interventions may affect untreated units through geographic or socioeconomic connections among countries or districts, which are known as \textit{spillovers}. However, SCM relies on the stable unit treatment value assumption (SUTVA) \citep{rubin1978bayesian}, which posits that one unit's outcome is unaffected by the treatment status of other units, excluding spillovers. If spillovers occur to untreated units, SUTVA is violated, which can bias treatment effect estimation by the SCM.

This paper tackles this problem by extending the standard SCM to allow for spillover effects. 
We leverage the spatial autoregressive (SAR) model, a workhorse model in spatial data analysis, to address this issue with its capability to capture dependence of outcomes among units. We propose a novel approach to identify and estimate treatment effects by incorporating an SAR panel data model into SCM, where we characterize the outcomes of untreated units with the SAR model. This approach allows for spillover effects arising from the dependence of outcomes among treated and untreated units, thereby relaxing SUTVA in SCM. Furthermore, the extended SCM can identify and estimate spillover effects on untreated units, which are also parameters of interest in many empirical studies adopting SCM.

Building on the identification results, we propose a Bayesian inference method for the proposed SCM. Data targeted by SCM often involves panel data, but the number of units or length of pretreatment periods is not always sufficiently large. In such cases, frequentist approaches may not provide accurate inference. Bayesian inference offers several advantages. First, it can yield more accurate estimates than frequentist methods when the sample size is small. Second, it facilitates statistical inference via the Markov Chain Monte Carlo (MCMC) procedure. Third, Bayesian modeling is flexible, allowing models to be less dependent on the assumptions about prior distributions.

We apply Bayesian regularization in the construction of synthetic controls. Following \cite{kim2020bayesian}, we use Bayesian horseshoe priors \citep{carvalho2010horseshoe} to model the synthetic weights. The Bayesian horseshoe prior has strong regularization effects, making it effective in avoiding overfitting bias. This is particularly important in SCM, as it is often applied to data with a relatively large number of control units and short pretreatment periods, which can easily cause overfitting bias in the estimation of synthetic control outcomes. The simulation study in the paper examines the finite sample performance of the proposed Bayesian inference.

This paper presents two substantive empirical studies applying the proposed SCM. The first revisits \citet{abadie2010synthetic} to evaluate the impact of California’s tobacco tax on cigarette consumption, and the second estimates the effect of Sudan’s north–south split in 2011 on GDP per capita. Both studies involve spillover effects among US states and African countries, respectively. Regarding California's tobacco tax, our results show a negative impact on tobacco consumption in California, supporting the findings of \cite{abadie2010synthetic}. We also find evidence of spillover effects, showing that the tax reduced consumption in other states.

To the best of our knowledge, the second empirical study is the first to examine the economic impact of Sudan's north-south split on the Sudans (the region of the former united Sudan). In a related study, \citet{mawejje2021economic} estimate the economic impact of the 2012 oil production halt in South Sudan on South Sudan itself using the standard SCM. They intentionally exclude countries neighboring South Sudan from the control group to address spillover concerns. This is a common practice in dealing with spillover in SCM \citep{abadie2021using};
however, excluding untreated units, particularly neighboring countries, can lead to poor fitting of synthetic controls and cause additional bias. The SCM proposed in this paper does not need to exclude any untreated units. Our empirical results show that Sudan's north-south split led to a 9.5\% decline in GDP per capita in the Sudans, with a cumulative reduction of 34\% between 2011 and 2015. We also find evidence of negative spillover effects on other African countries with strong economic ties to Sudan, such as Egypt and Kenya.


\subsection*{Related Literature}
Since \citet{abadie2003economic} and \citet{abadie2010synthetic} pioneered SCM, along with its broad application in the social sciences, SCM has been theoretically and methodologically advanced by many works. 
\citet{abadie2021penalized} present a penalized synthetic control method to reduce interpolation biases. 
\citet{arkhangelsky2021synthetic} combine SCM with difference-in-differences (DID) and propose a new estimator with robustness properties.
\citet{ben2021augmented} propose an augmented synthetic control method to correct bias resulting from imperfect pre-treatment fit. 
\citet{li2020statistical} and \citet{chernozhukov2021exact} propose inference methods for SCM.
\citet{ferman_pinto_2021} point out a potential bias in SCM when the perfect fit assumption is not satisfied.

Several works have proposed Bayesian estimation in SCM to facilitate statistical inference through MCMC sampling and/or to apply flexible modeling. \cite{kim2020bayesian} introduce a Bayesian synthetic control method that uses the Bayesian horseshoe prior proposed by \cite{carvalho2010horseshoe} for the synthetic weights. Their simulations demonstrate more accurate predictions of counterfactual outcomes than the standard SCM, particularly when the number of units is relatively large compared to the length of the pretreatment periods. 
\cite{brodersen2015inferring} propose a method based on Bayesian structural time-series models. 
\cite{pang2022bayesian} and \cite{klinenberg2023synthetic} propose approaches to correct bias in SCM induced by the dynamic characteristics of untreated units. \cite{pang2022bayesian} propose a Bayesian posterior predictive approach with a latent factor term that incorporates unit-specific time trends. \cite{klinenberg2023synthetic} incorporates time-varying parameters by using a state space framework.

While the related literature mentioned so far assumes SUTVA in SCM, few studies have considered its relaxation. \cite{cao2019estimation} allow for spillover effects in SCM by assuming a specified structure of spillover effects. Our approach differs from theirs in two aspects: (i) our approach supposes a spatial correlation structure for the outcomes of units rather than a direct structure for spillover effects, and (ii) our approach supposes the existence of a synthetic control only for the treated unit, whereas their approach supposes synthetic controls for both the treated and untreated units. \cite{menchetti2022estimating} assume that a unit's potential outcome depends on its own and neighboring treatment assignments. They propose a Bayesian structural time series method to identify and estimate treatment and spillover effects by estimating the predictive distribution of counterfactual outcomes. \cite{grossi2020synthetic} focus on the average spillover effects within specified clusters and provide a method for constructing confidence intervals for treatment and average spillover effects. In the context of DID, \cite{miguel2004worms} propose an approach for estimating spillover effects using RCT data, which utilizes non-compliance in treatment groups to identify spillover effects.

This work also relates and contributes to the economic analysis of national breakups \citep[e.g.,][]{alesina2000economic,bolton1997breakup} through its original empirical study.
The economic consequences of major political reconfigurations, such as national breakups or unifications, have long been the subject of rigorous empirical and theoretical research. For example, \citet{redding2008costs} exploit the division and reunification of Germany as a natural experiment to quantify the impact of border changes on the economic development of border regions. Seminal works by \citet{alesina2000economic} and \citet{bolton1997breakup} explore the interplay between economic integration, political economy conflicts, and the formation or dissolution of nations, providing foundational frameworks for understanding such large-scale transformations. 
This paper contributes to this literature by empirically demonstrating the macroeconomic impacts of national division, using unique data on Sudan's split and a novel SCM approach that accounts for spillover effects arising from the breakup.

\subsection*{Structure of the Paper}
The remainder of the paper proceeds as follows. Section \ref{sec:setup} describes the SCM setup with spillover effects.
Section \ref{sec:identification} introduces the SAR panel data model and presents the identification results.
Section \ref{sec:estimation} proposes a Bayesian SCM procedure. Section \ref{sec:simulation_study} shows simulation results to demonstrate the finite-sample performance of the proposed SCM.
Section \ref{sec:application} presents the results of two empirical applications.
Section \ref{sec:conclusion} concludes the paper. The supplementary appendix provides additional details of the Bayesian SCM procedure.


\section{Setup}\label{sec:setup}

Consider units $i=0,1,\cdots,N$ and time points $t=1,2,\cdots,T$.
We suppose that $i=0$ is the unit receiving the treatment, and $i=1,\cdots,N$ are the units in the control group.
Let $T_{0}$ be the number of pretreatment periods with $1 \leq T_0 < T$.
The treatment status of a unit $i$ at time $t$ is denoted as $d_{it} \in \{0,1\}$, where $d_{it}=1$ means that unit $i$ receives treatment at time $t$, and $d_{it}=0$ means otherwise. In our context, $d_{it}=1$ when $i=0$ and $t > T_0$ and $d_{it}=0$ otherwise. 
We define $\bm{d}_{t} \equiv (d_{0t}, d_{1t},\cdots,d_{Nt})\in\{0,1\}^{N+1}$, the treatment status vector for time $t$.

We consider the potential outcome framework. While many studies assume that one's potential outcome is not influenced by the treatment status of others (SUTVA) \citep{rubin1978bayesian}, this study allows for the potential outcome to be affected by the treatment status of others.
Thus, for treatment status $\bm{d}_{t}$ at time $t$, the potential outcome of unit $i$ is denoted as $Y_{it}(\bm{d}_{t})$.\footnote{SUTVA means that the potential outcome of each unit $i$ is represented by $Y_{it}(d_{it})$, which does not depend on the treatment status of others $\bm{d}_{t} \backslash d_{it}$,}
The observed outcomes are
\begin{align*}
  Y_{it} \equiv \begin{cases}
             Y_{it}(0,0,\cdots,0) & \text{if } t\leq T_{0}, \\
             Y_{it}(1,0,\cdots,0) & \text{if } t>T_{0}
           \end{cases}
\end{align*}
for all $i=0,1,\cdots,N$ and $t=1,2,\ldots,T$.

We define the \textit{treatment effect} $\xi_{0t}$ for the treated unit $0$ at time $t$ $(> T_{0})$ as follows:
\begin{align}
  \xi_{0t} \equiv \underbrace{Y_{0t}(1,0,\cdots,0)}_{\text{observed outcome}} - \underbrace{Y_{0t}(0,0,\cdots,0)}_{\text{counterfactual outcome}},\nonumber
\end{align}
where $Y_{0t}(1,0,\cdots,0)$ is the outcome when unit $0$ is treated, and $Y_{0t}(0,0,\cdots,0)$ is the outcome when the unit is not treated, a counterfactual outcome not observed in the data.

The framework of this study can capture the spillover effects on units in the control group.
Although no unit in the control group receives treatment, the influence of unit $0$ receiving treatment may affect the outcomes of the control group units, which is referred to as spillover effects.
We define the \textit{spillover effect} $\xi_{it}$ for any untreated unit $i\ (\geq 1)$ and time $t\ (>T_{0})$ as follows:
\begin{align}
  \xi_{it} \equiv \underbrace{Y_{it}(1,0,\cdots,0)}_{\text{observed outcome}} - \underbrace{Y_{it}(0,0,\cdots,0)}_{\text{counterfactual outcome}}.\nonumber
\end{align}
The spillover effect is the counterfactual difference in the outcomes of an untreated unit when unit $i=0$ receives treatment versus when unit $i=0$ does not. 

For the outcomes of the treated unit, following \citet{abadie2010synthetic}, we suppose the following assumption.

\medskip
\begin{assumption}[Perfect Fit]\label{Perfect fit}
There exists a vector of weights $\bm{\alpha} = (\alpha_{1},\alpha_{2},\cdots,\alpha_{N})^{\top}\in\mathbb{R}^{N}$ that satisfies the following: For each $t=1,2,\cdots,T$,
  \begin{align}
    Y_{0t}(0,0,\cdots,0) = \sum_{i=1}^{N}\alpha_{i}Y_{it}(0,0,\cdots,0) \mbox{\ \ a.s.} \label{eq:perfect_fit}
  \end{align}
\end{assumption}
\medskip
This assumption implies that, in the absence of treatment, the control outcome of the treated unit can be replicated by the weighted average of the control outcomes of the untreated units.  This is a fundamental assumption for SCM and is either explicitly or implicitly imposed in most SCM studies.
Since our framework allows outcomes to depend on the treatment status of other units, Assumption \ref{Perfect fit} pertains to a state where all units are untreated.

Under Assumption \ref{Perfect fit}, we can estimate the synthetic weights $\bm{\alpha}$ by solving the following least-squares problem:
\begin{align}
  \widehat{\bm{\alpha}} = \argmin_{\bm{\alpha}\in\mathbb{R}^{N}} \sum_{t=1}^{T_{0}} \bigg(Y_{0t}-\sum_{i=1}^{N}\alpha_{i}Y_{it}\bigg)^{2}. \label{estimate alpha}
\end{align}
Using $\widehat{\bm{\alpha}}$, the standard SCM \citep{abadie2010synthetic} estimates the treatment effect $\xi_{0t}$ ($t > T_0$) by $Y_{0t} - \sum_{i=1}^{N}\hat{\alpha}_{i}Y_{it}$. 
Under SUTVA and Assumption \ref{Perfect fit} (perfect fit), \citet{abadie2010synthetic} show the identification and unbiased estimation of treatment effects using SCM.
However, when SUTVA is violated, the standard SCM can be biased in its estimation of treatment effects. The subsequent section clarifies the sources of this bias and introduces a new approach to identify both treatment and spillover effects.


\section{Identification}\label{sec:identification}

This section begins by clarifying the bias in the standard SCM that arises from spillover effects. We then introduce the SAR panel data model and discuss its capability to account for spillover effects. Finally, we present our main identification result for both treatment and spillover effects.


\subsection{Bias for the Standard SCM}

We first show that when SUTVA is violated, the standard SCM can be biased due to spillover.
Assumption \ref{Perfect fit} implies that the treatment effects $\xi_{0t}$ can be expressed as $Y_{0t}(1,0,\cdots,0)-\sum_{i=1}^{N}\alpha_{i}Y_{it}(0,0,\cdots,0)$.
However, when the outcomes depend on the treatment statuses of other units, we cannot observe counterfactual control outcomes $Y_{it}(0,0,\cdots,0)$ after the pre-treatment periods ($t > T_0$). This causes bias in the standard SCM as follows:
\begin{align*}
  Y_{0t} - \sum_{i=1}^{N}\alpha_{i}Y_{it} &=  Y_{0t}(1,0,\cdots,0) - \sum_{i=1}^{N}\alpha_{i}Y_{it}(1,0,\cdots,0) \\
  & = \xi_{0t}+\underbrace{\sum_{i=1}^{N}\alpha_{i}\big(Y_{it}(0,0,\cdots,0) - Y_{it}(1,0,\cdots,0)\big)}_{\text{bias}},
\end{align*}
where $Y_{0t} - \sum_{i=1}^{N}\alpha_{i}Y_{it}$ is the standard SCM estimator of $\xi_{0t}$ given knowledge of  $\bm{\alpha}$. This result shows the existence of bias in the standard SCM when SUTVA is not satisfied.


\subsection{Spatial Autoregressive Model}

To address the issue of bias caused by spillover, we introduce a model that captures spillover effects between treated and untreated units. Specifically, for the outcomes of units in the control group, we assume the following SAR panel data model for each $t\geq 1$:
\begin{align}
  \underbrace{\bm{Y}_{t}^{c}(\bm{d}_{t})}_{N \times 1} = \rho \big(\bm{w} Y_{0t}(\bm{d}_{t}) + \bm{W} \bm{Y}_{t}^{c}(\bm{d}_{t})\big) + \bm{X}_{t}\bm{\beta} + \bm{u}_{t}^{\bm{d}_{t}}, \label{network model}
\end{align}
where $\bm{Y}_{t}^{c}(\bm{d}_{t}) \equiv (Y_{1t}(\bm{d}_{t}), Y_{2t}(\bm{d}_{t}),\cdots, Y_{Nt}(\bm{d}_{t}))^{\top}\in\mathbb{R}^{N}$.
The vector $\bm{w}\in\mathbb{R}^{N}$ and matrix $\bm{W}\in\mathbb{R}^{N\times N}$ comprise spatial weights,  
which are to be specified a priori. For $i=1,2,\ldots,N$ and $j=0,1,\ldots,N$, we denote by $w_{ij}$ the spatial weight between units $i$ and $j$; that is, the $(i,j+1)$-th element of the matrix $(\bm{w},\bm{W}) \in \mathbb{R}^{N\times (1+N)}$. Typical examples of spatial weights include adjacent weights, where $w_{ij}$ is 1 if units $i$ and $j$ are adjacent and 0 otherwise. Geographic distance (e.g., distance between the capitals of two countries) and economic distance (e.g., trade amount between two countries) are also often employed as spatial weights in the literature on spatial data analysis \citep{lesage2009introduction}. 
The matrix $\bm{X}_{t}\equiv (\bm{X}_{1t},\ldots,\bm{X}_{Nt})^{\top}\in\mathbb{R}^{N\times k}$ denotes the covariate matrix for control units, where $\bm{X}_{it}$ is the covariate vector for unit $i$ at time $t$. We suppose that $\bm{X}_{t}$ does not depend on the treatment status $\bm{d}_{t}$.  The vector $\bm{u}_{t}^{\bm{d}_{t}} \equiv (u_{1t}^{\bm{d}_{t}},\cdots,u_{Nt}^{\bm{d}_{t}})^{\top}\in\mathbb{R}^{N}$ contains the error terms.

The SAR panel data model \eqref{network model} captures spillover effects arising from spatial dependence in outcomes between treated and untreated units. 
The magnitude of these spillovers is determined by the spatial autoregressive coefficient $\rho$ as well as the spatial weights $\bm{w}$ and $\bm{W}$, which encode the strength and structure of spatial correlations.
Several methods have been proposed to estimate $\rho$ in  \modelname panel data models  \citep[e.g.,][]{fingleton2008generalized,lee2010estimation,su2012semiparametric,glass2016spatial,liang2022semiparametric}.


\subsection{Identification}
We now turn to the identification of treatment and spillover effects under Assumption \ref{Perfect fit} and the SAR panel data model \eqref{network model}.
For notational simplicity, since the treatment state $d_{it}$ for all $i \geq 1$ and $t$ is always $0$, we write
\begin{align}
  Y_{it}(1) & = Y_{it}(1,0,\cdots,0) \ \ \ \ (\forall i,t), \nonumber \\
  Y_{it}(0) & = Y_{it}(0,0,\cdots,0) \ \ \ \ (\forall i,t). \nonumber
\end{align}
We impose the following assumption.
\medskip
\begin{assumption}\label{individual factor}
For $t=1,2,\cdots,T$ and $i=1,2,\cdots,N$, the error term $u_{it}^{\bm{d}_{t}}$ depends only on its own treatment state $d_{it}$; that is, $u_{it}^{\bm{d}_{t}} = u_{it}^{\bm{d}_{t}^{\prime}}$ a.s. for any $\bm{d}_{t}=(d_{0t},d_{1t},\ldots,d_{1T})^{\top}$ and $\bm{d}_{t}^{\prime}=(d_{0t}^{\prime},d_{1t}^{\prime},\ldots,d_{1T}^{\prime})^{\top}$ such that $d_{it} = d_{it}^{\prime}$.
\end{assumption}
\medskip

This assumption rules out spillover effects arising from unobserved factors, focusing instead on those generated by outcome dependence. Addressing the former would require additional structural assumptions that may not align with the standard SCM framework.
Given this assumption and the fact that $d_{it}=0$ for any control unit $i$ at any time $t$, the error terms for the control units always satisfy $\bm{u}_{t}^{\bm{d}_{t}}=\bm{u}_{t}^{\bm{0}_{N+1}}$, where $\bm{0}_{N+1}$ denotes an $(N+1)$-dimensional zero vector.  
For notational simplicity, we denote $\bm{u}_{t}^{\bm{d}_{t}}$($=\bm{u}_{t}^{\bm{0}_{N+1}}$) by $\bm{u}_{t}$.

We further assume that the model (\ref{network model}) and the synthetic weights $\bm{\alpha}$ in Assumption \ref{Perfect fit} satisfy the following condition.

\medskip
\begin{assumption}\label{invertible} $\bm{I}_{N}-\rho \bm{w}\bm{\alpha}^{\top} -\rho \bm{W}$ is full rank.
\end{assumption}
\medskip

This assumption ensures that the matrix $\bm{I}_{N}-\rho \bm{w}\bm{\alpha}^{\top} -\rho \bm{W}$ is invertible. Assumption \ref{invertible} is testable given estimators of $\bm{\alpha}$ and $\rho$. 

Given the parameters $\bm{\alpha}$ and $\rho$, the treatment and spillover effects can be identified as follows.
For each $t>T_{0}$, under Assumption \ref{Perfect fit} and the SAR panel data model \eqref{network model}, we have
\begin{align}
  \bm{Y}_{t}^{c}(0) & = \rho \bm{w} Y_{0t}(0) + \rho \bm{W} \bm{Y}_{t}^{c}(0) + \bm{X}_{t}\bm{\beta} + \bm{u}_{t} \nonumber                           \\
                    & =\rho \bm{w} \bm{\alpha}^{\top} \bm{Y}_{t}^{c}(0) + \rho \bm{W} \bm{Y}_{t}^{c}(0) + \bm{X}_{t}\bm{\beta} + \bm{u}_{t}.\nonumber
\end{align}
Since the matrix $(\bm{I}_{N}-\rho \bm{w}\bm{\alpha}^{\top} -\rho \bm{W})$ is invertible under Assumption \ref{invertible}, rearranging the above equation yields
\begin{align}
  \bm{Y}_{t}^{c}(0) & = \big(\bm{I}_{N}-\rho \bm{w}\bm{\alpha}^{\top} -\rho \bm{W}\big)^{-1}(\bm{X}_{t}\bm{\beta} + \bm{u}_{t}).\nonumber
\end{align}
The treatment effect can then be written as
\begin{align}
  \xi_{0t} & = Y_{0t}(1) - \bm{\alpha}^{\top} \bm{Y}_{t}^{c}(0) \nonumber                                                                                              \\
           & =Y_{0t}(1) - \bm{\alpha}^{\top} \big(\bm{I}_{N}-\rho \bm{w}\bm{\alpha}^{\top} -\rho \bm{W}\big)^{-1}\big(\bm{X}_{t}\bm{\beta} + \bm{u}_{t}\big).\nonumber
\end{align}
Furthermore, under the model \eqref{network model} and Assumption, \ref{individual factor}, we have for each $t > T_0$:
\begin{align}
  \big(\bm{I}_{N} - \rho \bm{W}\big)\bm{Y}_{t}^{c}(1) - \rho \bm{w}Y_{0t}(1)= \bm{X}_{t}\bm{\beta} + \bm{u}_{t}.\nonumber
\end{align}
Therefore, for $t > T_0$, we obtain
\begin{align}
  \xi_{0t} & =Y_{0t}(1) - \bm{\alpha}^{\top} \big(\bm{I}_{N}-\rho \bm{w}\bm{\alpha}^{\top} -\rho \bm{W}\big)^{-1}\big(\big(\bm{I}_{N} - \rho \bm{W}\big)\bm{Y}_{t}^{c}(1) - \rho \bm{w}Y_{0t}(1)\big) \nonumber \\
  &= Y_{0t} - \bm{\alpha}^{\top} \big(\bm{I}_{N}-\rho \bm{w}\bm{\alpha}^{\top} -\rho \bm{W}\big)^{-1}\big(\big(\bm{I}_{N} - \rho \bm{W}\big)\bm{Y}_{t}^{c} - \rho \bm{w}Y_{0t}\big). \label{identification}
\end{align}
If the parameters $\bm{\alpha}$ and $\rho$ are estimated using data from the pretreatment periods ($t\leq T_{0}$), then $\xi_{0t}$ ($t> T_{0}$) can be estimated as equation (\ref{identification}). Note that $\bm{\alpha}$ can be consistently estimated by the least-squares method (\ref{estimate alpha}) under Assumption \ref{Perfect fit},  and several methods are available to estimate 
$\rho$ in the literature of the SAR panel data model \citep[e.g.,][]{fingleton2008generalized,lee2010estimation,su2012semiparametric,glass2016spatial,liang2022semiparametric}. 

The following theorem summarizes the identification result for $\xi_{0t}$.
\medskip

\begin{theorem}[Identification of the Treatment Effect]\label{Identification for Treatment Effect}
Suppose that Assumptions \ref{Perfect fit}, \ref{individual factor}, and \ref{invertible} hold. Then, given $\rho$ and $\bm{\alpha}$, the treatment effect $\xi_{0t}$ for $t>T_{0}$ can be identified as follows:
  \begin{align}
    \xi_{0t} = Y_{0t} - \bm{\alpha}^{\top} \big(\bm{I}_{N}-\rho \bm{w}\bm{\alpha}^{\top} -\rho \bm{W}\big)^{-1}\big(\big(\bm{I}_{N} - \rho \bm{W}\big)\bm{Y}_{t}^{c} - \rho \bm{w}Y_{0t}\big). \label{eq:identification_treatment_effect}
  \end{align}
\end{theorem}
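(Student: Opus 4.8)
The plan is to start from the definition $\xi_{0t} = Y_{0t}(1) - Y_{0t}(0)$ and to reconstruct the unobserved counterfactual $Y_{0t}(0)$ entirely from quantities observed in the post-treatment period. First I would invoke Assumption \ref{Perfect fit} to rewrite the counterfactual scalar $Y_{0t}(0)$ as the weighted combination $\bm{\alpha}^{\top} \bm{Y}_{t}^{c}(0)$ of the (likewise unobserved) counterfactual control vector. This reduces the problem to expressing $\bm{Y}_{t}^{c}(0)$ in observable terms.

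Next I would apply the SAR model \eqref{network model} to the all-zeros treatment state. Substituting the perfect-fit relation $Y_{0t}(0) = \bm{\alpha}^{\top} \bm{Y}_{t}^{c}(0)$ into the treated-unit term $\rho \bm{w} Y_{0t}(0)$ collapses it into the control vector, yielding $(\bm{I}_{N} - \rho \bm{w}\bm{\alpha}^{\top} - \rho \bm{W})\bm{Y}_{t}^{c}(0) = \bm{X}_{t}\bm{\beta} + \bm{u}_{t}$. Assumption \ref{invertible} then lets me invert the coefficient matrix and solve for $\bm{Y}_{t}^{c}(0) = (\bm{I}_{N} - \rho \bm{w}\bm{\alpha}^{\top} - \rho \bm{W})^{-1}(\bm{X}_{t}\bm{\beta} + \bm{u}_{t})$. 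The counterfactual is now pinned down up to the structural term $\bm{X}_{t}\bm{\beta} + \bm{u}_{t}$, which itself remains unobserved.

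The key step, and the one I expect to carry the full weight of the argument, is eliminating this structural term. I would apply the SAR model a second time, now to the actually realized treatment state $(1,0,\ldots,0)$ for $t > T_{0}$. The essential observation is that Assumption \ref{individual factor} forces the control error vector to be invariant to the treated unit's assignment: since every control unit has $d_{it}=0$ under both treatment profiles, $\bm{u}_{t}^{(1,0,\ldots,0)} = \bm{u}_{t}$. Because the covariate term $\bm{X}_{t}\bm{\beta}$ is also unchanged, the identical structural quantity $\bm{X}_{t}\bm{\beta}+\bm{u}_{t}$ reappears in the observed equation, giving $(\bm{I}_{N} - \rho \bm{W})\bm{Y}_{t}^{c}(1) - \rho \bm{w}Y_{0t}(1) = \bm{X}_{t}\bm{\beta}+\bm{u}_{t}$. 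This is the crux of the proof: the error invariance lets me read off the unobserved structural term directly from observed post-treatment outcomes.

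Finally, I would substitute this expression for $\bm{X}_{t}\bm{\beta}+\bm{u}_{t}$ back into the formula for $\bm{Y}_{t}^{c}(0)$, and then into $\xi_{0t} = Y_{0t}(1) - \bm{\alpha}^{\top}\bm{Y}_{t}^{c}(0)$. Replacing the potential outcomes $Y_{0t}(1)$ and $\bm{Y}_{t}^{c}(1)$ with their observed counterparts $Y_{0t}$ and $\bm{Y}_{t}^{c}$, which coincide for $t>T_{0}$ by the definition of the observed outcomes, yields \eqref{eq:identification_treatment_effect}. The main conceptual obstacle is recognizing that the two applications of the SAR model share an identical structural residual, which is precisely where Assumption \ref{individual factor} does its work; the only remaining technical requirement is the invertibility guaranteed by Assumption \ref{invertible}.
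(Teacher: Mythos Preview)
Your proposal is correct and follows essentially the same argument as the paper: apply the SAR model under the all-zero profile and substitute the perfect-fit relation to solve for $\bm{Y}_{t}^{c}(0)$ via Assumption~\ref{invertible}, then apply the SAR model under the realized profile and use Assumption~\ref{individual factor} to recognize that the structural term $\bm{X}_{t}\bm{\beta}+\bm{u}_{t}$ is identical across the two profiles, allowing it to be read off from observed post-treatment outcomes. The paper's proof merely says ``the result follows from the discussion above,'' and that discussion matches your outline step for step.
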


\begin{proof}
    The result follows from the discussion above.
\end{proof}

\medskip
The discussion above shows that the counterfactual outcome $\bm{Y}_{t}^{c}(0)$ for each untreated unit for $t>T_{0}$ can be identified as $\big(\bm{I}_{N}-\rho \bm{w}\bm{\alpha}^{\top} -\rho \bm{W}\big)^{-1}\big(\big(\bm{I}_{N} - \rho \bm{W}\big)\bm{Y}_{t}^{c} - \rho \bm{w}Y_{0t}\big)$. Therefore, given the parameters $\bm{\alpha}$ and $\rho$, the spillover effects on the untreated units are also identifiable.

\medskip
\begin{theorem}[Identification of the Spillover Effects]\label{Identification for Spillover Effect}
Suppose that Assumptions \ref{Perfect fit}, \ref{individual factor}, and \ref{invertible} hold. Then, given $\rho$ and $\bm{\alpha}$, the spillover effects $\bm{\xi}_{t}^{c}=(\xi_{1t},\xi_{2t},\cdots,\xi_{Nt})^{\top}\in\mathbb{R}^{N}$ on the $N$ control units for $t>T_{0}$ can be identified as follows:
  \begin{align}
    \bm{\xi}_{t}^{c} & =\bm{Y}_{t}^{c} - \big(\bm{I}_{N}-\rho \bm{w}\bm{\alpha}^{\top} -\rho \bm{W}\big)^{-1}\big(\big(\bm{I}_{N} - \rho \bm{W}\big)\bm{Y}_{t}^{c} - \rho \bm{w}Y_{0t}\big).\label{eq:identification_spillover_effect}
  \end{align}
\end{theorem}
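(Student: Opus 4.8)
The plan is to derive the spillover-effect vector directly from its definition as a difference of potential outcomes, reusing the counterfactual characterization already established in the run-up to Theorem \ref{Identification for Treatment Effect}. By definition, the spillover effect on unit $i$ is $\xi_{it} = Y_{it}(1) - Y_{it}(0)$, so stacking over the control units gives $\bm{\xi}_{t}^{c} = \bm{Y}_{t}^{c}(1) - \bm{Y}_{t}^{c}(0)$. The observed part $\bm{Y}_{t}^{c}(1)$ equals the data $\bm{Y}_{t}^{c}$ for $t > T_0$, since every control unit has treatment status $0$ under $\bm{d}_t = (1,0,\ldots,0)$, and likewise $Y_{0t}(1) = Y_{0t}$. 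Hence the entire task reduces to identifying the single counterfactual object $\bm{Y}_{t}^{c}(0)$ in terms of observables.

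First I would recall the counterfactual formula obtained just before the theorem: combining Assumption \ref{Perfect fit} with the SAR model \eqref{network model} yields $\bm{Y}_{t}^{c}(0) = \big(\bm{I}_{N} - \rho \bm{w}\bm{\alpha}^{\top} - \rho \bm{W}\big)^{-1}(\bm{X}_{t}\bm{\beta} + \bm{u}_{t})$, where invertibility is guaranteed by Assumption \ref{invertible}. Second, I would invoke Assumption \ref{individual factor} to equate the error term $\bm{u}_t$ across treatment states of unit $0$, so that the treated-regime identity $\big(\bm{I}_{N} - \rho \bm{W}\big)\bm{Y}_{t}^{c}(1) - \rho \bm{w} Y_{0t}(1) = \bm{X}_{t}\bm{\beta} + \bm{u}_{t}$ lets me eliminate the unobserved term $\bm{X}_{t}\bm{\beta} + \bm{u}_{t}$. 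Substituting this into the counterfactual formula and replacing $\bm{Y}_{t}^{c}(1)$, $Y_{0t}(1)$ by the observables $\bm{Y}_{t}^{c}$, $Y_{0t}$ produces $\bm{Y}_{t}^{c}(0) = \big(\bm{I}_{N} - \rho \bm{w}\bm{\alpha}^{\top} - \rho \bm{W}\big)^{-1}\big(\big(\bm{I}_{N} - \rho \bm{W}\big)\bm{Y}_{t}^{c} - \rho \bm{w} Y_{0t}\big)$. Third, I would subtract this expression from $\bm{Y}_{t}^{c}$ to recover the stated identity \eqref{eq:identification_spillover_effect} for $\bm{\xi}_{t}^{c}$.

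The main obstacle is conceptual rather than computational: one must recognize that the \emph{same} matrix inverse $\big(\bm{I}_{N} - \rho \bm{w}\bm{\alpha}^{\top} - \rho \bm{W}\big)^{-1}$ that identifies the treated unit's counterfactual in Theorem \ref{Identification for Treatment Effect} simultaneously pins down the entire control vector $\bm{Y}_{t}^{c}(0)$, so that no assumptions beyond \ref{Perfect fit}, \ref{individual factor}, and \ref{invertible} are required. Since this counterfactual characterization was already extracted in deriving \eqref{identification}, the remaining work is the purely algebraic bookkeeping step of subtracting $\bm{Y}_{t}^{c}(0)$ from the observed $\bm{Y}_{t}^{c}$; I do not anticipate any genuinely new difficulty.
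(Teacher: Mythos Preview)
Your proposal is correct and follows essentially the same approach as the paper: the paper's proof simply notes that $\bm{Y}_{t}^{c} = \bm{Y}_{t}^{c}(1)$ for $t>T_{0}$ and that the discussion preceding Theorem~\ref{Identification for Treatment Effect} already identifies $\bm{Y}_{t}^{c}(0)$ as $\big(\bm{I}_{N}-\rho \bm{w}\bm{\alpha}^{\top} -\rho \bm{W}\big)^{-1}\big(\big(\bm{I}_{N} - \rho \bm{W}\big)\bm{Y}_{t}^{c} - \rho \bm{w}Y_{0t}\big)$, then subtracts. Your write-up is a more explicit retelling of those same steps.
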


\begin{proof}
    Note that $\bm{Y}_{t}^{c} = \bm{Y}_{t}^{c}(1)$ for $t>T_0$ and that $\big(\bm{I}_{N}-\rho \bm{w}\bm{\alpha}^{\top} -\rho \bm{W}\big)^{-1}\big(\big(\bm{I}_{N} - \rho \bm{W}\big)\bm{Y}_{t}^{c} - \rho \bm{w}Y_{0t}\big) = \bm{Y}_{t}^{c}(0)$. This leads to the result (\ref{eq:identification_spillover_effect}).
\end{proof}

\medskip

Theorems \ref{Identification for Treatment Effect} and \ref{Identification for Spillover Effect} show that by incorporating the SAR model into the SCM framework, treatment and spillover effects can be identified without relying on SUTVA. 
The results in Theorems \ref{Identification for Treatment Effect} and \ref{Identification for Spillover Effect} also suggest that although the SAR model (\ref{network model}) includes the covariates $\bm{X}_t$ and error terms $\bm{u}_{t}$, it is not necessary to estimate $\bm{\beta}$ or the distribution of $\bm{u}_{t}$ to estimate treatment and spillover effects. For their estimation, only the spatial autoregressive parameter $\rho$ and the spatial weights $(\bm{w},\bm{W})$ in the SAR model (\ref{network model}) are relevant.

Remarkably, when there is no spillover (i.e., $\rho=0$), our identification result (\ref{eq:identification_treatment_effect}) for $\xi_{0t}$ corresponds to the identification result of the standard SCM \citep{abadie2010synthetic}. That is, when $\rho=0$,  the identification result (\ref{eq:identification_treatment_effect}) simplifies to
\begin{align*}
  \xi_{0t} & = Y_{0t} - \bm{\alpha}^{\top} (\bm{I}_{N}-\rho \bm{w}\bm{\alpha}^{\top} - \rho \bm{W})^{-1}\big((\bm{I}_{N}-\rho \bm{W})\bm{Y}_{t}^{c} - \rho \bm{w} Y_{0t}\big) \\
           & = Y_{0t} - \bm{\alpha}^{\top} \bm{I}_{N}(\bm{Y}_{t}^{c}-0)                                                                                 \\
           & = Y_{0t} - \bm{\alpha}^{\top} \bm{Y}_{t}^{c},
\end{align*}
where the final line is identical to the identification equation for the standard SCM \citep{abadie2010synthetic}.
This result indicates that the identification result obtained from the standard SCM is the special case of our identification result when $\rho=0$ (no spillover).


\section{Inference}\label{sec:estimation}

In this section, we propose a Bayesian inference method for treatment effects $\xi_{0t}$ and spillover effects $\xi_{it}$ ($i =1,2,\ldots,N$), building on the identification results presented in Section \ref{sec:identification}. SCM is typically applied to data with small or no large pretreatment periods, where frequentist methods may not provide accurate estimations. On the other hand, Bayesian methods can provide precise statistical inference (e.g., credible intervals) through MCMC procedures even with short pretreatment periods. Moreover, by examining the posterior samples of $\rho$, one can test for the presence of spatial correlation.

This section presents a Bayesian inference method for the model described in Sections \ref{sec:setup} and \ref{sec:identification}.
Let $\bm{\theta}= (\bm{\alpha},\rho,\bm{\beta},\bm{\eta},\{\bm{\gamma}_{t}\}_{t=1}^{T_{0}},\sigma^{2})$ denote the parameter vector.  
Under the model in Sections~\ref{sec:setup}--\ref{sec:identification}, the pre-treatment outcomes satisfy
\begin{align*}
Y_{0t} &= \bm{\alpha}^\top \bm{Y}_t^{c}, \\
(\bm{I}_{N} - \rho \bm{W}) \bm{Y}_t^{c}
    &= \rho\, \bm{w} Y_{0t} + \bm{X}_t \bm{\beta} + \bm{\eta}\bm{\gamma}_t + \bm{e}_t,
\qquad t = 1,\ldots,T_0.
\end{align*}
Letting $\tilde{\bm{u}}_t$ denote the transformed disturbance term, the conditional density of $\bm{Y}_t^c$ given 
$(Y_{0t},\rho,\bm{\beta},\bm{\eta},\bm{\gamma}_t,\sigma^2)$ is Gaussian with Jacobian factor 
$\lvert \bm{I}_N - \rho \bm{W} \rvert$.  
Thus, the joint likelihood for the pre-treatment sample is
\begin{align}
\label{eq:fulljointlik}
\mathcal{L}(\bm{\theta} \mid \{Y_{0t},\bm{Y}_t^{c}\}_{t\le T_0})
    &= \prod_{t=1}^{T_0}
        p\!\left(Y_{0t} \mid \bm{\alpha},\bm{Y}_t^{c}\right)
        \cdot
        \lvert \bm{I}_N - \rho \bm{W} \rvert
        (2\pi\sigma^{2})^{-N/2}
        \exp\!\left\{-\frac{1}{2\sigma^{2}}
        \tilde{\bm{u}}_t^\top \tilde{\bm{u}}_t\right\}.
\end{align}

Expression (\ref{eq:fulljointlik}) makes explicit that (i) information from the treated unit $Y_{0t}$ enters entirely through the synthetic-control restriction, while (ii) the spatial autoregressive structure governs the distribution of $\bm{Y}_t^{c}$.  
However, the likelihood contains the multiplicative interaction term $\rho\,\bm{w}\bm{\alpha}^\top$, which produces weak identification of $(\bm{\alpha},\rho)$ and leads to extremely slow mixing when implementing full joint MCMC.  
Furthermore, jointly sampling all latent factor components $(\bm{\eta},\bm{\gamma}_t)$ over $T_0$ periods imposes substantial computational burden without improving post-treatment inference.  
In practice, we found that full joint estimation results in unstable chains, poor effective sample sizes, and unreliable inference for the treatment effects.
To overcome these identification and computational difficulties, we decompose the joint likelihood in (\ref{eq:fulljointlik}) into two components,
\begin{align}
\mathcal{L}_{1}(\bm{\alpha} \mid Y_{0},Y^{c})
    &= \prod_{t=1}^{T_0}
        p\!\left(Y_{0t} \mid \bm{\alpha}, \bm{Y}_t^{c}\right),
\label{eq:L1}
\\[0.3em]
\mathcal{L}_{2}(\rho \mid Y^{c}, \hat{\bm{\alpha}})
    &= \prod_{t=1}^{T_0}
        p\!\left(\bm{Y}_t^{c} \mid \rho, \hat{\bm{\alpha}}\right),
\label{eq:L2}
\end{align}
and estimate the two components sequentially.
We therefore adopt a two-step Bayesian inference strategy in which (i) the synthetic control weights $\bm{\alpha}$ are sampled from \eqref{eq:L1}, and the spatial parameter $\rho$ is sampled from \eqref{eq:L2} conditional on $\hat{\bm{\alpha}}$.
The detailed MCMC procedures for each step are described in the following subsections.
The validity of the MCMC implementation including the joint distribution test of \citet{geweke2004getting}, prior predictive checks, sensitivity analyses, and convergence diagnostics is documented in detail in the Supplementary Appendix.

\subsection{Synthetic Weights}

Regarding the estimation of the synthetic weights $\bm{\alpha}$, we adopt \citeauthor{kim2020bayesian}'s (\citeyear{kim2020bayesian}) method, which utilizes the Bayesian horseshoe prior as the prior distribution for parameters.
The Bayesian horseshoe prior places a hierarchical prior distribution on parameters, characterized by a strong shrinkage effect around zero.
\citet{park2008bayesian} show that LASSO regression in linear models corresponds to estimation with a Laplace prior distribution on coefficients, while the Bayesian horseshoe prior proposed by \citet{carvalho2010horseshoe} suggests an even more selective coefficient choice.
Thus, by placing this prior distribution on the synthetic weights, it is anticipated that the estimated weights will select the units in the control group that are more related to the treated unit.

The following hierarchical prior distributions are placed on $\alpha_{1},\alpha_{2},\cdots,\alpha_{N}$:
\begin{align*}
  \alpha_{i} \mid \lambda_{i} & \sim N(0,\lambda_{i}^{2}),\ \ \text{for}\ i=1,2,\cdots,N,       \\
  \lambda_{i} \mid \tau       & \sim \text{Half-Cauchy}(0,\tau),\ \ \text{for}\ i=1,2,\cdots,N, \\
  \tau \mid \sigma_{1}        & \sim \text{Half-Cauchy}(0,\sigma_{1}),                          \\
  \sigma_{1}                  & \sim \text{Half-Cauchy}(0,10).
\end{align*}
When these prior distributions are set, as shown by \citet{makalic2015simple}, the full conditionals of each parameter can be derived analytically, enabling parameter sampling using the Gibbs sampler. Full conditional distributions for each parameter are presented in the supplementary appendix.


\subsection{Spatial Autoregressive Panel Data Model}
Regarding the model (\ref{network model}), we set the Bayesian horseshoe prior for $\bm{\beta}$ as in estimating the synthetic weights, that is,\begin{align*}
  \beta_{k}\mid\kappa_{k} & \sim N(0,\kappa_{i}^{2}), \ \ \text{for}\ k=1,2,\cdots,K,       \\
  \kappa_{k}\mid\psi      & \sim \text{Half-Cauchy}(0,\psi),\ \ \text{for}\ k=1,2,\cdots,K, \\
  \psi\mid\sigma_{2}      & \sim \text{Half-Cauchy}(0,\sigma_{2}),                          \\
  \sigma_{2}              & \sim \text{Half-Cauchy}(0,10).
\end{align*}


We assume that $\bm{u}_{t}$ follows a multilevel latent factor model $\bm{u}_{t}=\bm{\eta}\bm{\gamma}_{t}+\bm{e}_{t}$, where $\bm{\gamma}_{t}\in\mathbb{R}^{p}$ denote factors at time $t$,  $\bm{\eta}\in\mathbb{R}^{N \times p}$ are factor loadings, and $\bm{e}_{t} \in\mathbb{R}^{N}$ is an error vector.
We assume $\bm{\gamma}_{t}$ follow the AR($1$) model $\bm{\gamma}_{t}=\phi_{\gamma}\bm{\gamma}_{t-1} + \bm{v}_{t}$, where $\bm{v}_{t}\sim N(\bm{0}_{p},\sigma^{2}_{\gamma}\bm{I}_{p})$.
For each control unit $i$ ($=1,\cdots,N$), $\eta_{i}$ follows $p$-dimensional normal distribution $N_{p}(\bm{0}_{p},\sigma^{2}_{\eta}\bm{\Sigma}_{\eta})$, where $\bm{\Sigma}_{\eta}=\mathrm{diag}(\omega_{1}^{2},\cdots,\omega_{p}^{2})$. The priors of  $\sigma_{\eta}$ and $\omega_{1}, \cdots,\omega_{p}$ are $\text{half-Cauchy}(0,10)$. These settings are the same as those used by \citet{pang2022bayesian}.

We adopt the method proposed by \cite{lesage1997bayesian} and \citet{lesage2009introduction} for modeling $\rho$. 
Given all parameters except $\rho$, the conditional distribution of $\rho$ is
\begin{align*}
  p(\rho\mid\text{rest}) \propto \prod_{t=1}^{T_{0}}|\bm{I}_{N} - \rho\bm{W}|\exp(-\dfrac{1}{2\sigma^{2}}\tilde{\bm{u}}_{t}^{\top}\tilde{\bm{u}}_{t}),
\end{align*}
where $\tilde{\bm{u}}_{t} = \bm{A}\bm{Y}_{t}^{c} - \rho\bm{w}Y_{0t}-\bm{X}_{t}\bm{\beta} - \bm{\gamma}_{t}^{\top}\bm{\eta}$ and $\bm{A}=\bm{I}_{N}-\rho\bm{W}$.
However, since this is not a standard form, it is impossible to analytically derive the full conditional distribution.
Therefore, sampling is conducted using the Metropolis algorithm. If the value of $\rho$ in the $m$-th sampling is $\rho^{(m)}$, then the proposal value $\rho^{*}$ for the ($m+1$)-th sampling is determined by $\rho^{*}=\rho^{(m)}+ k \cdot N(0,1)$, where $k$ is a tuning parameter adjusted to achieve an acceptance rate of approximately $40\%$ to $60\%$.
The MCMC procedure for all parameters is detailed in the supplementary appendix.

Using the sampled parameters from the posterior distributions, we can estimate the treatment and spillover effects through the identification results (\ref{eq:identification_treatment_effect}) and (\ref{eq:identification_spillover_effect}) in Theorems \ref{Identification for Treatment Effect} and \ref{Identification for Spillover Effect}. Specifically, if $M$ samplings are performed, the estimated values of the treatment and spillover effects for the $m$-th sample are as follows:
\begin{align*}
  \xi_{0t}^{(m)}     & = Y_{0t} - \bm{\alpha}^{\top(m)}\big(\bm{I}_{N}-\rho^{(m)}\bm{w}\bm{\alpha}^{\top(m)}-\rho^{(m)}\bm{W}\big)^{-1}\big((\bm{I}_{N}-\rho^{(m)}\bm{W})\bm{Y}_{t}^{c}-\rho^{(m)}\bm{w}Y_{0t}\big), \\
  \bm{\xi}_{t}^{(m)} & = \bm{Y}_{t}^{c} - \big(\bm{I}_{N}-\rho^{(m)}\bm{w}\bm{\alpha}^{\top(m)}-\rho^{(m)}\bm{W}\big)^{-1}\big((\bm{I}_{N}-\rho^{(m)}\bm{W})\bm{Y}_{t}^{c}-\rho^{(m)}\bm{w}Y_{0t}\big).
\end{align*}
Calculating this for each $m\ (=1,2,\cdots,M)$ allows us to construct the distribution of the treatment and spillover effects.
Then the posterior means of $\xi_{0t}$ and $\bm{\xi}_{t}$ can be obtained by $(1/M)\sum_{m=1}^{M}\xi_{0t}^{(m)}$ and $(1/M)\sum_{m=1}^{M}\bm{\xi}_{t}^{(m)}$, respectively.
Given the estimators of $\bm{\alpha}$ and $\rho$, the estimation $\xi_{0t}$ and $\bm{\xi}_{t}$ depends only on the observed outcomes $Y_{0t}$ and $\bm{Y}_{t}^{c}$ over time. This suggests that time-varying components in the SAR panel data model, such as factors $\bm{\gamma}_{t}$, are not required to be estimated in the post-treatment periods $t\ (> T_{0})$.

\section{Simulation Study}\label{sec:simulation_study}


\subsection{Simulation Design}

We conduct a simulation study to examine the finite sample performance of the proposed Bayesian inference method. 
Three scenarios are considered for the number of untreated units $N$: $N=16$, $36$, and $64$. For the length of time periods, we consider two scenarios: $(T,T_{0})=(30,20)$ and $(60,50)$, where each scenario has a post-treatment length of ten.

In each scenario of $(N,T,T_0)$, we set up the DGPs as follows. 
We consider a network of $N$ untreated units represented by a rook matrix.\footnote{We use a rook matrix based on an $r$ board (so that $N=r^2$) to represent the network of $N$ untreated units. The rook matrix represents a square tessellation with connectivity of four for the inner fields on the chessboard, and two and three for the corner and border fields, respectively. 
}
Subsequently, we set the spatial weight $\omega_{ij}$ of any two untreated units to be $1$ if units $i$ and $j$ are connected in the network and $0$ otherwise.
Each element of the adjacency vector $\bm{w}$ takes the value of $1$ if $i \in \{1,2,3,4\}$ and $0$ otherwise.
The synthetic weights $\bm{\alpha}$ are set to provide large weights only to control units adjacent to the treatment unit, as follows:
\begin{align*}
  \alpha_{i} = \begin{cases}
                 0.5 & \mathrm{if}\ i=1, \\
                 -0.2 & \mathrm{if}\ i=2, \\
                 0.4 & \mathrm{if}\ i \in \{3,4\}, \\
                 0.1/6 & \mathrm{if}\ i \in \{5,6,\ldots,10\}, \\
                 0 & \mathrm{otherwise}.
               \end{cases}
\end{align*}

For each $t$ $(=1,2,\cdots,T)$, the control outcomes of untreated units $\bm{Y}_{t}^{c}(0)$ are distributed from a SAR panel data model as follows: $  \bm{Y}_{t}^{c}(0) = \rho\bm{w}Y_{0t}(0) + \rho\bm{W}\bm{Y}_{t}^{c}(0) + \bm{X}_{t}\bm{\beta} + \bm{u}_{t}$, where $\bm{X}_{t} = (X_{1t},\ldots,X_{NT})^{\top}$ with each element being i.i.d as $N(0,1)$, $\bm{u}_{t} = (u_{1t},\ldots,u_{NT})^{\top}$  with  each element being i.i.d as $N(0,1)$, and $\beta = 1.0$. The control outcome of the treated unit $Y_{0t}(0)$ is generated as $Y_{0t}(0) =\sum_{i=1}^{N}\alpha_{i}Y_{i,t}(0)$. We consider seven scenarios of $\rho$: $\rho=-0.8,\ -0.3,\ -0.1,\ 0.0,\ 0.1,\ 0.3,\ 0.8$. A larger absolute value of $\rho$ implies a stronger spatial correlation among the units. 

We generate the treatment outcomes of the treated unit $Y_{0t}(1)$ as $Y_{0t}(1) = Y_{0t}(0) + N(1, 1)$.
We also set
$  \bm{Y}_{t}^{c}(1) = \rho\bm{w}Y_{0t}(1) + \rho\bm{W}\bm{Y}_{t}^{c}(1) + \bm{X}_{t}\bm{\beta} + \bm{u}_{t}.$ The observed outcome for each $i$ and $t$ is $Y_{it} = Y_{it}(0)\cdot 1\{t\leq T_0\} + Y_{it}(1)\cdot 1\{t > T_0\}$. 

Following the DGPs, we conduct 1000 Monte Carlo simulations. For each simulation $r(=1,\cdots,1000)$, we draw $M(=5000)$ samples by MCMC and compute bias and root mean squared error (RMSE) of the posterior mean of treatment effect as follows:
\begin{align*}
 Bias &= \dfrac{1}{1000}\sum_{r=1}^{1000}\dfrac{1}{T-T_{0}}\sum_{t=T_{0}+1}^{T}\big(\xi_{0t}^{(r)}-\widehat{\xi}_{0t}^{(r)}\big), \\
  RMSE &= \sqrt{\dfrac{1}{1000}\sum_{r=1}^{1000}\dfrac{1}{T-T_{0}}\sum_{t=T_{0}+1}^{T}\big(\xi^{(r)}-\widehat{\xi}_{0t}^{(r)}\big)^{2}},
\end{align*}
where $\xi_{0t}^{(r)}$ denotes a true treatment effect in the $r$-th simulation, $\widehat{\xi}_{0t}^{(r)}=\sum_{m=1}^{M}\xi_{0t}^{(r,m)}/M$ is the estimated treatment effect with $\xi_{0t}^{(r,m)}$ being the estimate of the treatment effect in the $m$-th iteration for the $r$-th simulation. 

We compare the performance of the proposed method (labeled ``Proposed'') with those of the standard SCM  (labeled ``SCM'') \citep{abadie2010synthetic} and the Bayesian SCM of \cite{kim2020bayesian} (labeled ``BSCM'').\footnote{SCM does not involve MCMC sampling.}
We also calculate the 95\% coverage rate of treatment effect for the proposed method.


\subsection{Results}


Table \ref{5.2 Simulation} presents the simulation results for bias and RMSE. 
The key finding is that the error in estimating the treatment effect using SCM and BSCM becomes large as the absolute value of $\rho$ increases. 
Particularly in the case of a strong positive spatial correlation ($\rho=0.8$), SCM and BSCM exhibit substantial biases and large RMSEs. These results indicate that, when spillovers exist, the treatment effects estimated by SCM and BSCM are biased. Conversely, the proposed method exhibits a small bias and RMSE in each simulation scenario. The bias and RMSE of the proposed method do not increase with the magnitude of the spatial correlation. These results indicate that the proposed method is robust to the spillover effects arising from the spatial correlation of the outcomes.


Table \ref{5.2 Simulation coverage rate} presents 
the coverage rate of $95$\% credible interval of the treatment effect for the proposed method. In each scenario, the coverage rate is close to 95\%, indicating that the proposed inference method performs adequately. The inference performs well even when the length of pretreatment periods is not very large ($T_0 = 20$) and/or the spatial correlation is strong (i.e., $|\rho|$ is large).


\section{Empirical Application}
\label{sec:application}

We conduct two empirical studies applying the proposed method. The first estimates the impact of California’s tobacco tax on cigarette consumption \citep{abadie2010synthetic}, and the second examines the economic impact of Sudan’s 2011 division on GDP per capita. We present the results of both applications below.

\subsection{Application I: California Tobacco Tax}
\label{sec:application_tabacco_tax}

In this section, we apply the proposed method to estimate the effect of California tobacco tax (Proposition 99) on cigarette  consumption \citep{abadie2010synthetic}.
Proposition 99 is an anti-tobacco law issued in California in 1988 to promote awareness of the health risks associated with tobacco. It raised the excise tax on cigarettes by 25 cents per pack.
\citet{abadie2010synthetic} estimate the treatment effect of Proposition 99 on cigarette sales using the standard SCM, without accounting for spillover effects.
Our proposed method allows for spillovers and enables the estimation of both treatment and spillover effects of Proposition 99.


\subsubsection{Data}

We use annual state-level cigarette sales data in the U.S. from 1970 to 2000, as used in \cite{abadie2010synthetic}.\footnote{
This dataset is available from \citet{cunningham2021causal} on GitHub.  It has been preprocessed according to the procedures described in \citet{abadie2010synthetic}.
}
The outcome of interest is annual per capita cigarette consumption at the state level. 
We compare our proposed method with the standard SCM \citep{abadie2010synthetic} (labeled as ``SCM''), which is not robust to the existence of spillover effects.
For SCM, we use the synthetic weights estimated by \cite{abadie2010synthetic}.\footnote{The estimated values are listed in Table 2 of \citet{abadie2010synthetic}.}

In implementing the proposed method, we include the average retail cigarette prices for each year $t$ as covariates $\bm{X}_{t}$. For the spatial weights $w_{ij}$ in the SAR model (\ref{network model}), we construct a contiguity-based adjacency matrix using U.S. state boundary shapefiles from the Census TIGER/Line data. Specifically, we set $w_{ij}=1$ if states $i$ and $j$ share a common land border and $w_{ij}=0$ otherwise, and we set all diagonal elements $w_{ii}$ to zero. We then row-normalize $\bm{W}$ so that each row sums to one. We also construct the vector $\bm{w}$ by the same way.


\subsubsection{Results}



Figure \ref{fig:treatment_effect_tabacco} presents the estimation results for the synthetic control outcomes for California and the treatment effects of the California tobacco tax on cigarette consumption, with the posterior means reported for the proposed method.
Panel (a) shows that both the SCM and the proposed method fit well with per-capita cigarette sales in California during the pretreatment periods. 
Panel (b) indicates that each estimation method suggests Proposition 99 reduced cigarette consumption in California, with the proposed method exhibiting larger treatment effect estimates than SCM. The 90\% credible interval for the proposed method is negative for all post-treatment years, beginning in 1988. This result suggests that Proposition 99 negatively impacted cigarette sales for a decade, supporting the findings of \citet{abadie2010synthetic}. Figure \ref{fig:weights_tabacco} illustrates the posterior mean synthetic weights from our method along with the weights reported by \citet{abadie2010synthetic}.


Figure \ref{fig:spillover_tabacco} reports the estimated spillover effects for all states in the control group. The results align closely with economic and geographic intuition: the estimated spillover effects are largest in states geographically adjacent to California. The most pronounced negative effect appears in Nevada, a direct neighbor of California, while smaller yet persistent negative effects are also evident for Idaho and Utah.



\subsection{Application II: The Economic Cost of the 2011 Sudan Split}
\label{sec:application_sudan_split}

In this section, we assess the impact of Sudan's north-south split in 2011 on GDP per capita in the Sudans (the region of the former united Sudan) and other African countries.


\subsubsection{Background}

Sudan has long been divided along ethnic and religious lines, with Arabs (primarily Muslims) predominantly in the north and Africans (primarily Christians) in the south, leading to many conflicts.
Particularly, the Darfur conflict, driven by the Arab versus non-Arab ethnic divide, has persisted for many years in western Sudan.
This conflict has been marked by large-scale atrocities, including mass killings carried out by Arab militias known as ``Janjaweed.''

Amid these unending conflicts, the Sudanese government and the Sudan People's Liberation Army (SPLA), the main rebel force in Southern Sudan, signed the Comprehensive Peace Agreement (CPA) in 2005. This agreement, aimed at ending Sudan's civil wars, allowed South Sudan to establish its own government, achieve autonomy, and pursue independence through a referendum. Following the CPA, South Sudan voted for independence in January 2011 and was officially recognized as a nation on July 9, 2011.

Since 2011, South Sudan's independence has caused several economic disturbances in both Sudan and South Sudan. In particular, South Sudan's oil production shutdown in 2012 and its relapse into conflict in 2013 provoked a severe macroeconomic crisis in the region \citep{mawejje2020macroeconomic}. This study estimates the economic impact of Sudan's south-north split, with a focus on GDP per capita in the Sudans and other African countries.\footnote{Using the standard SCM, \citet{mawejje2021economic} estimate the economic losses in South Sudan, owing to the oil production halt in 2012, finding a nearly 70\% loss in per capita real GDP from 2012 to 2018. They excluded neighboring countries of South Sudan from their SCM analysis to avoid bias caused by spillovers; however, this practice can result in a poorly fitting synthetic control, making the perfect-fit assumption (Assumption \ref{Perfect fit}) less plausible and potentially causing additional bias.}


\subsubsection{Data}

We use data from African countries obtained from the World Bank DataBank. Our outcome of interest is ``GDP per capita (constant 2015 US\$)'' in the post-division period. 
The covariates we use include ``exports of goods and services (\% of GDP)'', ``merchandise trade (\% of GDP)'', ``access to electricity (\% of population)'',  ``inflation measured by the consumer price index (annual \%)'', ``net migration'', and ``trade (\% of GDP)''. 
Countries with missing values for the outcome or any covariates were excluded.

We focus on GDP per capita in the Sudans (i.e., the region of the former united Sudan) as the treated unit's outcome $Y_{0t}$.\footnote{The GDP per capita in the Sudans after the Sudan split is calculated by dividing the sum of GDP in North and South Sudan by the sum of their populations. Before the split, it corresponds to the GDP per capita in (the united) Sudan.}
The control group consists of $N = 29$ African countries with complete data from 2000 to 2015.\footnote{The control group includes: Algeria, Angola, Benin, Botswana, Burundi, Cameroon, Central African Republic, Chad, Egypt, Gabon, Ghana, Ivory Coast, Kenya, Madagascar, Mali, Mauritania, Mauritius, Morocco, Niger, Nigeria, Republic of the Congo, Rwanda, Senegal, South Africa, Tanzania, Togo, Tunisia, Uganda, and Zambia.}
Since South Sudan's independence occurred in July 2011, we define the pre-treatment periods as 2000–2010 and the post-treatment periods as 2011–2015. Although GDP per capita in the Sudans cannot be computed for 2011 due to incomplete data following the split, this does not affect the estimation of synthetic control outcomes in the pre-treatment periods or the treatment effects from 2012 onward.

Regarding the spatial weights in model (\ref{network model}), we specify $w_{ij}$ as the average bilateral trade volume between countries $i$ and $j$, normalized by the total trade of country $i$:
\begin{align*}
  w_{ij} &= \dfrac{\text{the average amount of trade between countries } i \text{ and } j}{\sum_{j}\text{the average amount of trade between countries } i \text{ and } j},
\end{align*}
where the trade data are obtained from the IMF and the averages are calculated over the pre-intervention periods.  Each weight $w_{ij}$ captures the strength of the economic ties between the two countries. Figure~\ref{fig:trade_Sudan} presents the resulting weights $w_{ij}$ between the former united Sudan and each control country during the pre-treatment period. Countries with stronger economic connections to the former unified Sudan are expected to experience more pronounced spillover effects from the Sudan split.


\subsubsection{Results}



Figure \ref{fig:treatment_effect_Sudan} presents the estimation results for the synthetic outcomes of the Sudans and the treatment effects of the Sudan split.
Across all methods, the estimates indicate a negative impact of South Sudan’s independence on GDP per capita in the Sudans. In particular, the proposed method estimates a decline of approximately 100USD in 2012, corresponding to about 7.8\% of the actual GDP per capita for that year.
In 2015, the estimated GDP per capita in the synthetic Sudan is about $9.5\%$ higher than the actual GDP per capita.
Furthermore, the cumulative losses in the Sudans are estimated to have reached $34\%$ from 2011 to 2015.\footnote{The losses in the Sudans are defined by $100\times(\xi_{0t}/Y_{0t}(0))$ (\%) for each $t>T_{0}$ and those in control countries are defined by $100\times\xi_{it}/Y_{it}(0)$ (\%) for each $i=1,2,\cdots,N$ and $t>T_{0}$.
The cumulative losses are computed by $100\times \sum_{t=2011}^{2015}\xi_{it}/Y_{it}(0)$ for each $i=0,1,\cdots,29$. We estimate these by the posterior means of these losses.} Figure \ref{fig:weights_Sudan} displays the posterior mean weights from our Bayesian SCM alongside the weights estimated by the standard SCM.


Figure \ref{fig:spillover_Sudan} presents the estimated spillover effects of the Sudan split on other African countries. Countries with substantial trade volumes with the former united Sudan, such as Egypt and Kenya, experienced pronounced negative spillover effects from the split.

The political instability and north–south split in Sudan significantly altered the country’s industrial structure, which in turn had a profound impact on trade. For example, although South Sudan is rich in oil and other natural resources, the conflict between the North and South led to a halt in oil production in 2012, resulting in the loss of critical export commodities. This disruption also affected countries with close economic ties to Sudan through trade channels.
Thus, the north–south split not only had adverse economic consequences for the Sudans themselves but also generated negative spillover effects on other countries.
This empirical study illustrates that political and economic changes in one country can have broader consequences for other nations with strong economic linkages.


\section{Conclusion}\label{sec:conclusion}

This study extends the SCM to account for spillover effects. Although SCM is frequently applied to spatial data where spillovers may be present, the conventional approach relies on SUTVA, which can yield biased estimates when spillovers are present.
To address this limitation, we propose a novel SCM that incorporates the SAR panel data model to capture spillover effects. We also develop a Bayesian inference procedure that estimates both treatment and spillover effects, using horseshoe priors for regularization.
We apply the method to two empirical studies:
(i) evaluating the impact of the California tobacco tax on cigarette consumption \citep{abadie2010synthetic}, and (ii) assessing the economic impact of the 2011 Sudan division on GDP per capita. The first application reveals a negative impact of the tax on cigarette consumption in California and other US states. The second shows that the Sudan split substantially reduced GDP per capita in the Sudans and caused negative spillover effects
on other African countries with strong economic ties to the former united Sudan.


\bigskip

\subsection*{Acknowledgments}
We thank the editor and two anonymous referees for helpful comments that greatly
improved the quality of the paper. We also thank Kaoru Irie, Ryo Okui, Yasuyuki Sawada, and participants in various seminars and workshops for valuable comments. The authors gratefully acknowledge the financial support from JSPS KAKENHI Grant (number 24K16342).



\setstretch{1.5} 

\bibliography{ref}


\newpage
\section*{Tables}
\begin{table}[H]
\centering
\caption{Simulation Results for Bias and RMSE}
\label{5.2 Simulation}
\caption*{Panel (a): $T_0=20$}
\label{5.2 Simulation panel a}
\begin{tabular}{llrrrrrrrrr}
\hline
\multicolumn{2}{l}{} & 
\multicolumn{3}{c}{$N=16$} & \multicolumn{3}{c}{$N=36$} & \multicolumn{3}{c}{$N=64$}
 \\
\cmidrule(lr){3-5}\cmidrule(lr){6-8}\cmidrule(lr){9-11}

\multicolumn{1}{l}{} & \multicolumn{1}{l}{$\rho$} & 
Proposed & SCM & BSCM & Proposed & SCM & BSCM & Proposed & SCM & BSCM
 \\ \hline
 & -0.8 & -0.001 & -0.140 & -0.172 & -0.001 & -0.135 & -0.180 & -0.000 & -0.113 & -0.176 \\
& -0.3 & -0.001 & -0.063 & -0.071 & -0.001 & -0.060 & -0.073 & -0.001 & -0.056 & -0.072 \\
& -0.1 & -0.003 & -0.025 & -0.026 & -0.002 & -0.022 & -0.027 & -0.001 & -0.023 & -0.027 \\
Bias & 0 & -0.002 & -0.003 & 0.000 & -0.000 & 0.004 & 0.001 & 0.000 & -0.000 & 0.001 \\
& 0.1 & -0.002 & 0.022 & 0.029 & -0.002 & 0.028 & 0.028 & -0.001 & 0.022 & 0.029 \\
& 0.3 & -0.003 & 0.092 & 0.101 & -0.002 & 0.087 & 0.100 & 0.000 & 0.085 & 0.098 \\
& 0.8 & 0.001 & 0.504 & 0.569 & -0.008 & 0.456 & 0.515 & -0.005 & 0.466 & 0.512 \\
\hdashline
 & -0.8 & 0.008 & 0.467 & 0.245 & 0.030 & 0.536 & 0.257 & 0.036 & 0.583 & 0.255 \\
& -0.3 & 0.017 & 0.354 & 0.101 & 0.032 & 0.385 & 0.108 & 0.037 & 0.427 & 0.110 \\
& -0.1 & 0.024 & 0.323 & 0.037 & 0.033 & 0.345 & 0.047 & 0.043 & 0.384 & 0.057 \\
RMSE & 0 & 0.026 & 0.316 & 0.000 & 0.034 & 0.329 & 0.029 & 0.041 & 0.359 & 0.039 \\
& 0.1 & 0.029 & 0.308 & 0.042 & 0.035 & 0.326 & 0.050 & 0.044 & 0.352 & 0.057 \\
& 0.3 & 0.040 & 0.332 & 0.144 & 0.041 & 0.335 & 0.143 & 0.050 & 0.345 & 0.144 \\
& 0.8 & 0.126 & 0.956 & 0.807 & 0.116 & 0.840 & 0.733 & 0.123 & 0.812 & 0.729 \\
\hline
\end{tabular}

\bigskip

\caption*{Panel (b): $T_0=50$}
\label{5.2 Simulation panel b}
\begin{tabular}{llrrrrrrrrr}
\hline
\multicolumn{2}{l}{} & 
\multicolumn{3}{c}{$N=16$} & \multicolumn{3}{c}{$N=36$} & \multicolumn{3}{c}{$N=64$}
 \\
\cmidrule(lr){3-5}\cmidrule(lr){6-8}\cmidrule(lr){9-11}

\multicolumn{1}{l}{} & \multicolumn{1}{l}{$\rho$} & 
Proposed & SCM & BSCM & Proposed & SCM & BSCM & Proposed & SCM & BSCM
 \\ \hline
 & -0.8 & -0.000 & -0.144 & -0.175 & -0.000 & -0.147 & -0.182 & -0.000 & -0.148 & -0.182 \\
& -0.3 & -0.001 & -0.069 & -0.072 & -0.000 & -0.065 & -0.071 & -0.000 & -0.061 & -0.075 \\
& -0.1 & -0.000 & -0.028 & -0.026 & -0.001 & -0.024 & -0.026 & -0.000 & -0.023 & -0.026 \\
Bias & 0 & -0.001 & -0.001 & -0.000 & 0.000 & -0.004 & -0.000 & -0.000 & 0.001 & 0.000 \\
& 0.1 & -0.000 & 0.021 & 0.029 & -0.001 & 0.028 & 0.029 & -0.000 & 0.026 & 0.029 \\
& 0.3 & -0.001 & 0.088 & 0.101 & -0.001 & 0.084 & 0.098 & -0.000 & 0.086 & 0.097 \\
& 0.8 & 0.001 & 0.502 & 0.573 & -0.002 & 0.466 & 0.518 & -0.001 & 0.457 & 0.503 \\
\hdashline
 & -0.8 & 0.005 & 0.439 & 0.247 & 0.004 & 0.465 & 0.256 & 0.003 & 0.476 & 0.258 \\
& -0.3 & 0.011 & 0.327 & 0.102 & 0.008 & 0.327 & 0.102 & 0.006 & 0.333 & 0.105 \\
& -0.1 & 0.014 & 0.300 & 0.037 & 0.010 & 0.300 & 0.037 & 0.008 & 0.302 & 0.037 \\
RMSE & 0 & 0.016 & 0.286 & 0.000 & 0.012 & 0.288 & 0.000 & 0.009 & 0.289 & 0.000 \\
& 0.1 & 0.018 & 0.286 & 0.041 & 0.013 & 0.286 & 0.041 & 0.010 & 0.290 & 0.041 \\
& 0.3 & 0.025 & 0.307 & 0.142 & 0.017 & 0.301 & 0.140 & 0.013 & 0.303 & 0.138 \\
& 0.8 & 0.078 & 0.946 & 0.812 & 0.059 & 0.834 & 0.730 & 0.050 & 0.797 & 0.720 \\
\hline
\end{tabular}

\medskip
\begin{tablenotes}
{\footnotesize
\item Notes: Panels (a) and (b) show the simulation results for the bias and RMSE for $T_0=20$ and $50$, respectively. Each panel shows the simulation results for each of the proposed method, SCM, and BSCM, and each of $\rho \in \{-0.8,-0.3,-0.1,0.0,0.1,0.3,0.8\} $ and $N \in \{16,36,64\}$.
}
\end{tablenotes}
\end{table}

\begin{table}[H]
\centering
\caption{Coverage Rate of 95\% Credible Interval for the Proposed Method}
\label{5.2 Simulation coverage rate}
\begin{tabular}{lrrrrrr}
\hline
$\rho$ & \multicolumn{3}{c}{$T_{0}=20$} & \multicolumn{3}{c}{$T_{0}=50$} \\
\cmidrule(lr){2-4} \cmidrule(lr){5-7}
$\rho$ & $N=16$ & $N=36$ & $N=64$ & $N=16$ & $N=36$ & $N=64$ \\ \hline
-0.8 & 0.938 & 0.956 & 0.954 & 0.952 & 0.951 & 0.942 \\
-0.3 & 0.946 & 0.976 & 0.975 & 0.955 & 0.938 & 0.961 \\
-0.1 & 0.946 & 0.982 & 0.978 & 0.948 & 0.955 & 0.951 \\
 0.0 & 0.958 & 0.985 & 0.981 & 0.952 & 0.951 & 0.944 \\
 0.1 & 0.951 & 0.984 & 0.984 & 0.954 & 0.955 & 0.944 \\
 0.3 & 0.942 & 0.984 & 0.984 & 0.944 & 0.947 & 0.940 \\
 0.8 & 0.953 & 0.985 & 0.989 & 0.954 & 0.948 & 0.943 \\
\hline
\end{tabular}
\begin{tablenotes}
{\footnotesize
\item Notes: This table shows the coverage rate of $95\%$ credible interval for the proposed method for each scenario of $\rho$, $T_0$, and $N$. The coverage rate is computed over 1000 simulations per scenario.
}
\end{tablenotes}
\end{table}


\newpage
\newgeometry{left=1cm, right=1cm, top=2cm, bottom=2cm}
\section*{Figures}
\begin{figure}[H]
\caption{Estimates of the Counterfactual Outcomes  and Treatment Effects of California Tobacco Tax}
\label{fig:treatment_effect_tabacco}
  \begin{tabular}{cc}
    \begin{minipage}[t]{0.5\hsize}
    \subcaption{Observed Outcomes and Synthetic Control Outcomes}
      \centering
      \includegraphics[keepaspectratio, scale=0.4]{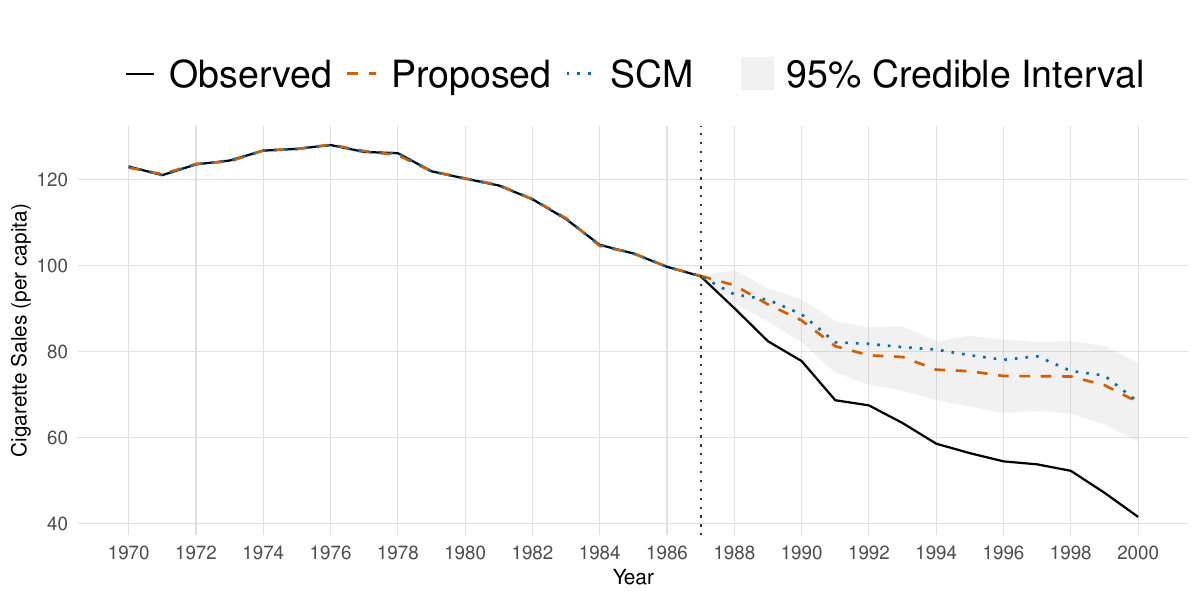}
    \end{minipage} &
    \begin{minipage}[t]{0.5\hsize}
    \subcaption{Treatment Effect Estimation}
      \centering
      \includegraphics[keepaspectratio, scale=0.4]{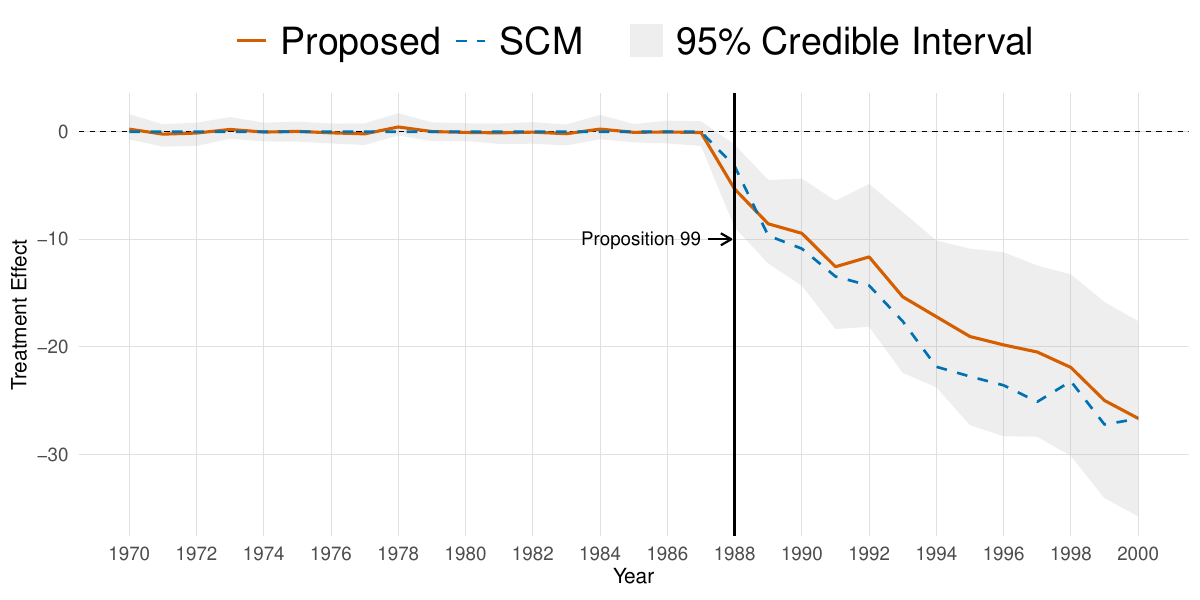}
    \end{minipage}
  \end{tabular}
  \begin{tablenotes}
    {\footnotesize
    \item Note: Panel (a) shows the observed outcomes (per-capita cigarette sales in packs) for California (black dotted line) and estimates of the counterfactual control outcomes for California using the proposed method (red solid line) and SCM (blue dashed line). Panel (b) shows the estimates of the treatment effects of Proposition 99 using the proposed method (red solid line) and the SCM (black dashed line). 
    Each panel also illustrates a 95\% credible interval by the proposed method.
    Post-mean of $\rho$ is $0.185$ with a $95\%$ posterior credible interval of $[0.096 ,0.268 ]$.}
  \end{tablenotes}
\end{figure}

\bigskip
\bigskip

\begin{figure}[H]
\caption{Estimated Weights for California’s Synthetic Control}
\label{fig:weights_tabacco}
  \centering
  \includegraphics[keepaspectratio, scale=0.6]{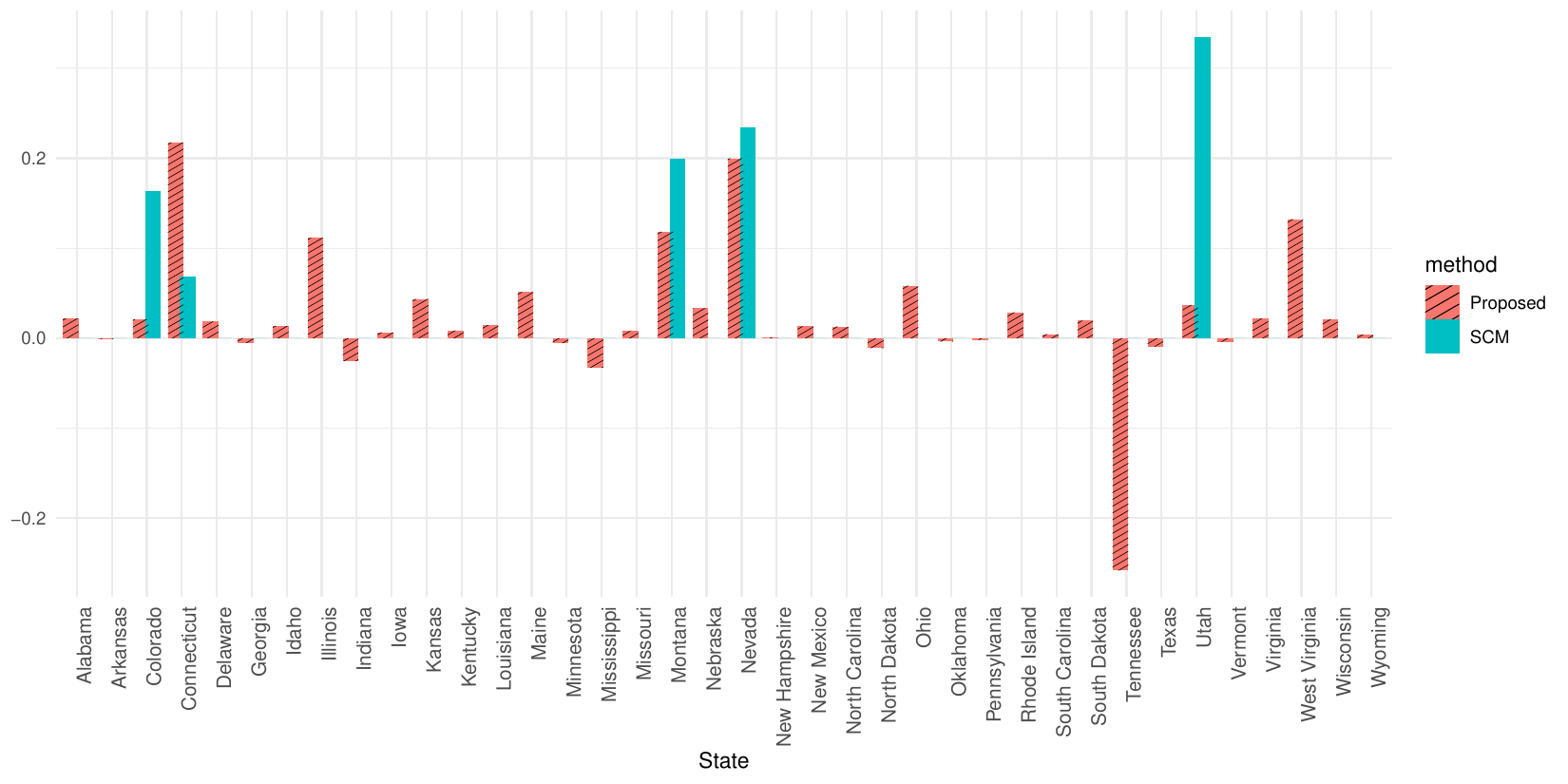}
  \begin{tablenotes}
    {\footnotesize
    \item Note: This bar chart displays the posterior mean synthetic weights $\bm{\alpha}$ from our proposed method (red bars) as well as the weights obtained by \citet{abadie2010synthetic} via the original SCM (blue bars). Each bar represents the contribution of a given donor unit to California’s synthetic control outcomes.}
  \end{tablenotes}
\end{figure}

\bigskip
\bigskip

\begin{figure}[H]
  \centering
    \caption{Estimates of the Spillover Effects of California Tobacco Tax}
    \label{fig:spillover_tabacco}
  \includegraphics[scale=0.8]{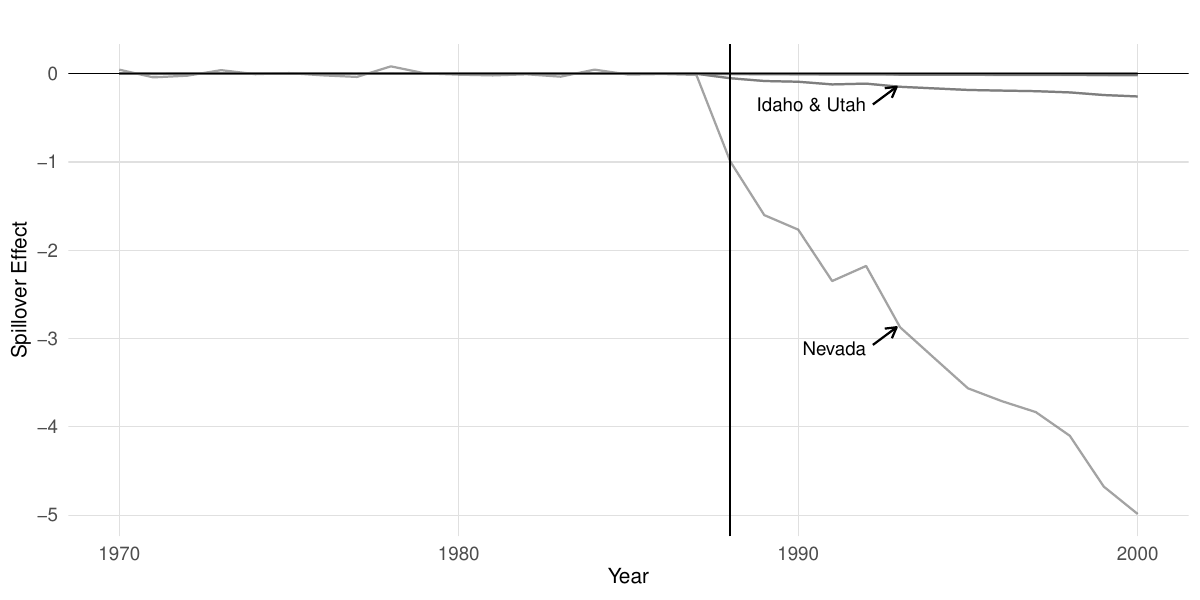}
  \begin{tablenotes}
    {\footnotesize
    \item Notes: This figure shows the spillover effects of Proposition 99 estimated by the proposed method for each state in the control group. For ease of reading, only three states with particularly strong spillover effects are indicated.}
  \end{tablenotes}
\end{figure}

\bigskip

\begin{figure}[H]
  \centering
  \caption{Bilateral Trade Volume between Sudan and Control Countries}
  \label{fig:trade_Sudan}
  \includegraphics[scale=0.8]{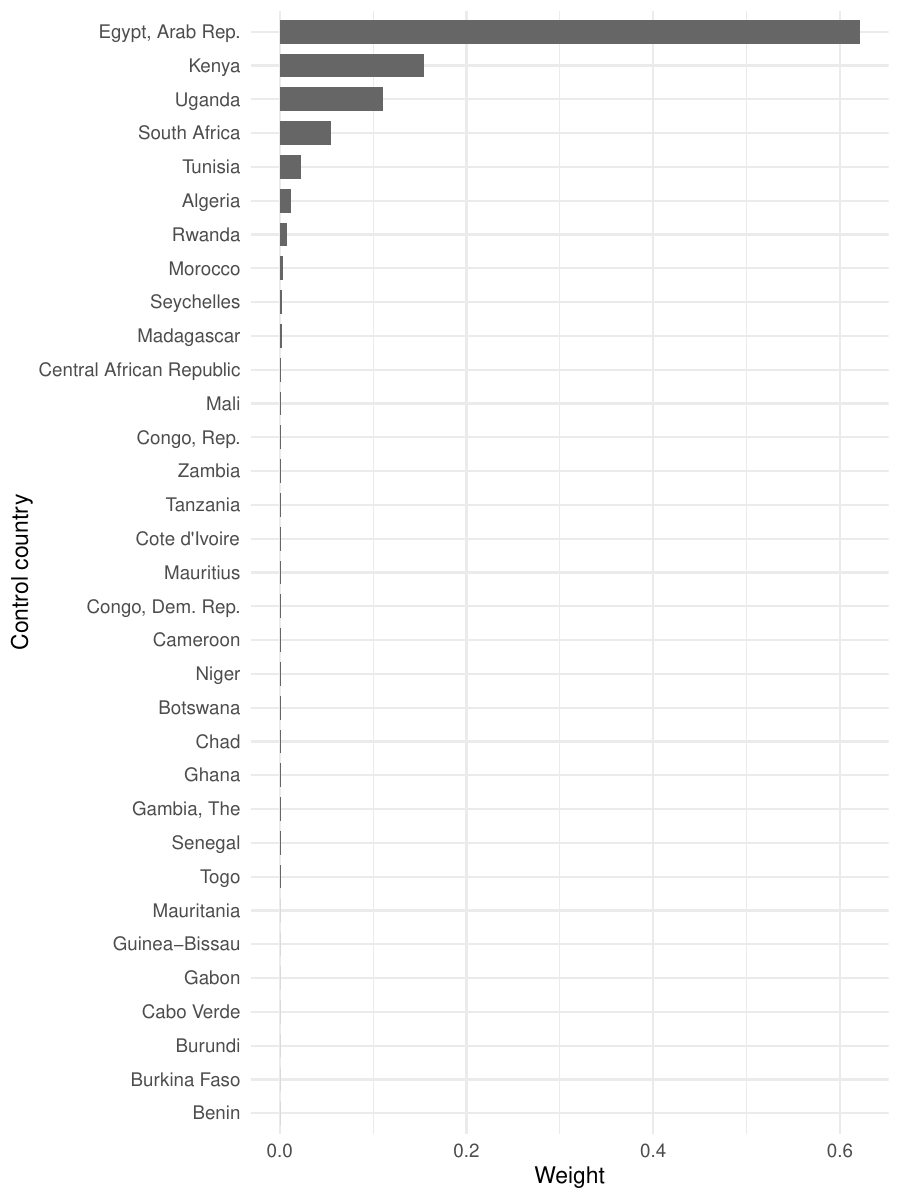}
  \begin{tablenotes}
   {\footnotesize
    \item Notes: This figure displays the spatial weights ($w_i$) used in the Sudan application, which are constructed based on the average bilateral trade volume between the former united Sudan and each control country during the pre-treatment period. A higher value indicates a stronger economic linkage.}
  \end{tablenotes}
\end{figure}

\bigskip

\begin{figure}[H]
  \centering
  \caption{Estimates of the Spillover Effects of Sudan Split}
  \label{fig:spillover_Sudan}
  \includegraphics[scale=0.8]{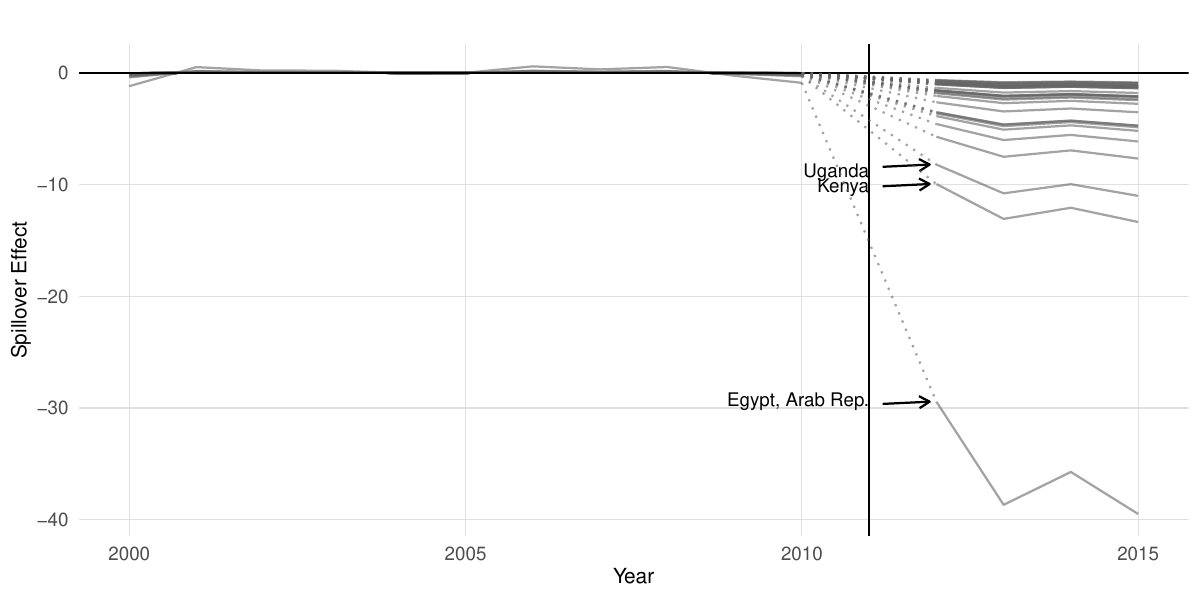}
  \begin{tablenotes}
   {\footnotesize
    \item Notes: This figure shows the spillover effects of the Sudan split, estimated by the proposed method for each country in the control group. Dotted lines illustrate missing values. To ensure readability, only four countries with strong spillover effects are indicated.}
  \end{tablenotes}
  \label{7.3 spillover}
\end{figure}

\bigskip

\begin{figure}[H]
  \caption{Estimates of the Counterfactual Outcomes and Treatment Effects of Sudan Split}
  \label{fig:treatment_effect_Sudan}
  \begin{tabular}{cc}
    \begin{minipage}[t]{0.5\hsize}
          \subcaption{Observed Outcomes and Synthetic Control Outcomes}
      \centering
      \includegraphics[keepaspectratio, scale=0.45]{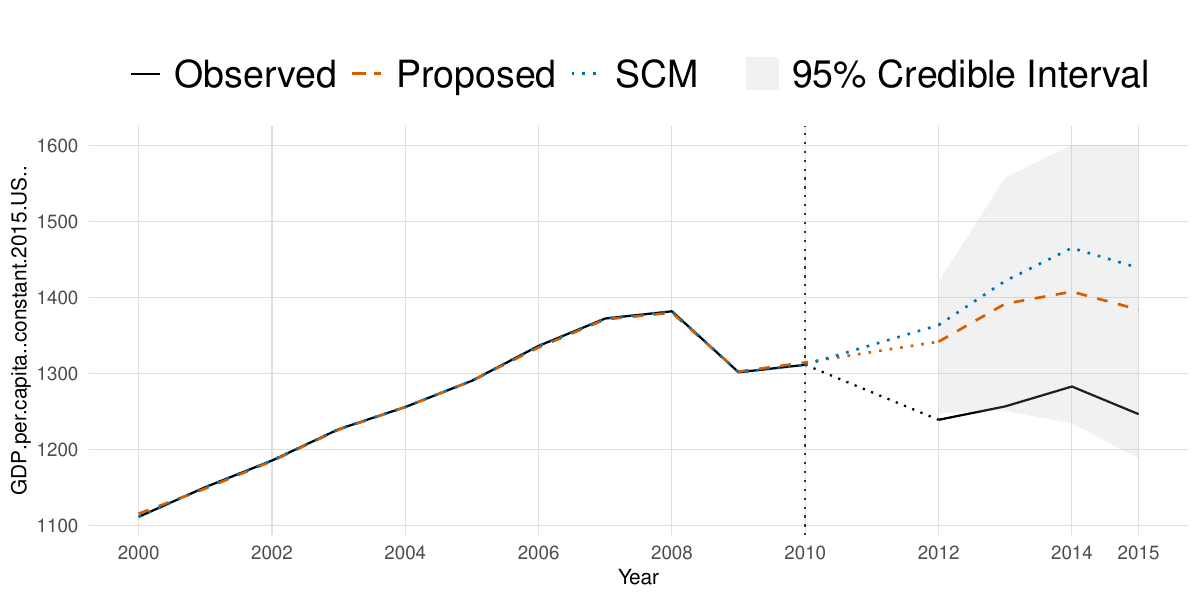}
    \end{minipage} &
    \begin{minipage}[t]{0.5\hsize}
    \subcaption{Estimates of Treatment Effects}
      \centering
      \includegraphics[keepaspectratio, scale=0.45]{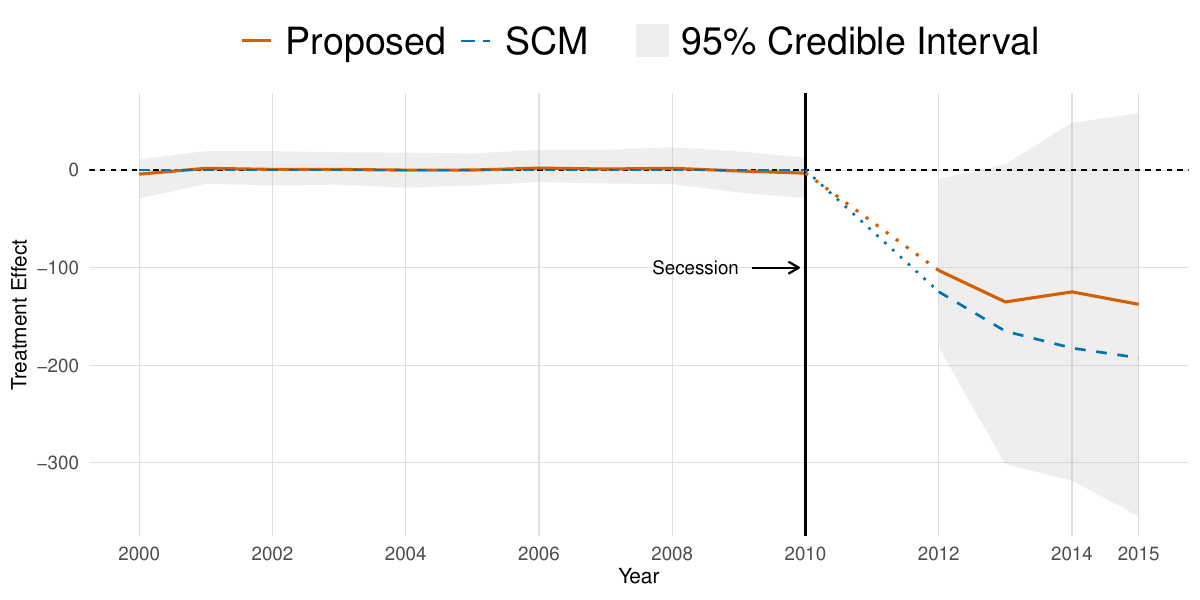}
    \end{minipage}
  \end{tabular}  
  \begin{tablenotes}
      {\footnotesize
    \item Notes: Panel (a) shows the observed outcomes (GDP per capita) of the Sudans (black dotted-dashed line) and estimates of the counterfactual control outcomes for the Sudans using the proposed method (red solid line) and the SCM (blue dashed line). Panel (b) shows the estimates of the treatment effects of the Sudan split using the proposed method (red solid line) and the SCM (blue dashed line). In each panel, missing values are illustrated with dotted lines.
    Each panel also illustrates a 95\% credible interval by the proposed method.
    Post-mean of $\rho$ is about $0.427 $ ($95\%$ posterior credible interval is $[0.294 ,0.550]$).}
  \end{tablenotes}
\end{figure}

\bigskip
\bigskip

\begin{figure}[H]
\caption{Estimated Weights in the Synthetic Control Construction for the Sudans}
\label{fig:weights_Sudan}
  \centering
  \includegraphics[keepaspectratio, scale=0.6]{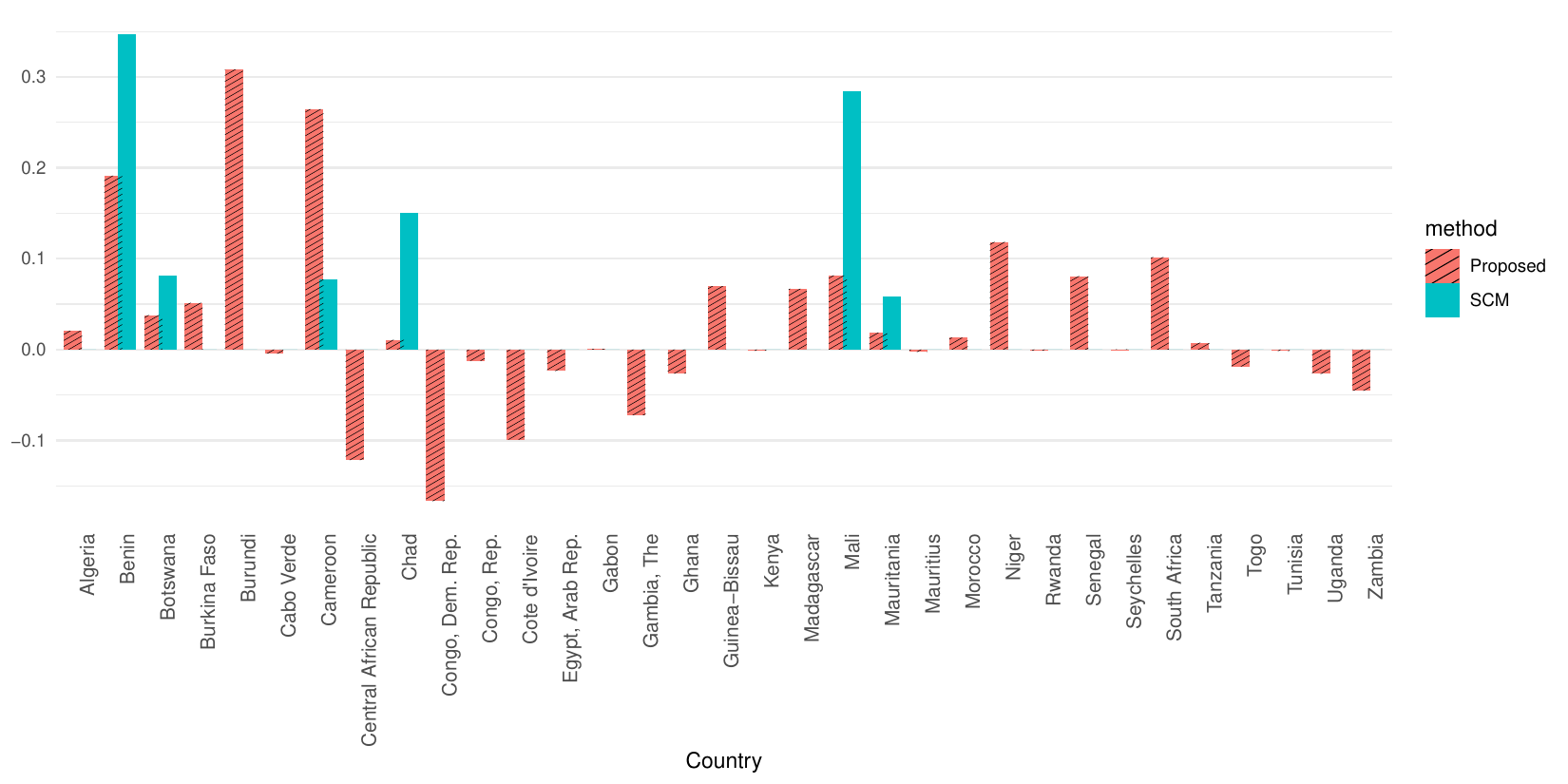}
  \begin{tablenotes}
    {\footnotesize
    \item Note: This bar chart displays the posterior mean weights from our proposed Bayesian SCM (red bars) as well as the weights estimated by the original SCM of \citet{abadie2010synthetic} (blue bars). Each bar indicates the contribution of a given donor country to the counterfactual GDP per capita trajectory for the Sudans.}
  \end{tablenotes}
\end{figure}

\restoregeometry


\newpage
\appendix

\appendix
\part*{Appendix}
\label{appendix}

To derive full conditional distributions, we use the following proposition:
\begin{prop}[\citet{10.1214/11-BA631}]
  \begin{align*}
    X\mid a \sim \text{Half-Cauchy}(0,a) \Longleftrightarrow \begin{cases}
                                                               X^{2}\mid b \sim \text{Inverse-Gamma}\qty(\dfrac{1}{2},\dfrac{1}{b}) \\
                                                               b\mid a \sim \text{Inverse-Gamma}\qty(\dfrac{1}{2},\dfrac{1}{a^{2}})
                                                             \end{cases}
  \end{align*}
\end{prop}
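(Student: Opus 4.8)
The plan is to establish the biconditional as an identity of probability densities: I would marginalize out the auxiliary variable $b$ from the hierarchical Inverse-Gamma construction and verify that the induced law of $X$ is exactly Half-Cauchy$(0,a)$. Recall that Half-Cauchy$(0,a)$ has density $2a/\{\pi(a^2+x^2)\}$ on $x>0$, while Inverse-Gamma$(\alpha,\beta)$ has density $\{\beta^\alpha/\Gamma(\alpha)\}\,y^{-\alpha-1}e^{-\beta/y}$ on $y>0$. The whole argument is a density computation organized around these two facts.

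First I would form the joint density of $(X^2,b)$ given $a$ by multiplying the two conditionals. Writing $u=x^2$, the conditional $X^2\mid b\sim\mathrm{IG}(1/2,1/b)$ contributes $b^{-1/2}u^{-3/2}e^{-1/(bu)}/\Gamma(1/2)$ and the prior $b\mid a\sim\mathrm{IG}(1/2,1/a^2)$ contributes $a^{-1}b^{-3/2}e^{-1/(a^2 b)}/\Gamma(1/2)$. Collecting the powers of $b$, the joint density is $u^{-3/2}b^{-2}\exp\{-(1/u+1/a^2)/b\}$ times the constant $1/\{a\,\Gamma(1/2)^2\}$.

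Second, I would integrate out $b$ over $(0,\infty)$. The substitution $s=1/b$ turns $\int_0^\infty b^{-2}e^{-c/b}\,db$ into $\int_0^\infty e^{-cs}\,ds=1/c$ with $c=1/u+1/a^2$, so the $b$-integral equals $u a^2/(a^2+u)$. Multiplying by $u^{-3/2}$ and the constant, and using $\Gamma(1/2)=\sqrt{\pi}$, gives the marginal density of $U=X^2$ as $(a/\pi)\,u^{-1/2}/(a^2+u)$. Third, I would change variables back to $X$ through $u=x^2$, $du=2x\,dx$, so that the density of $X$ is $2x$ times the density of $U$ evaluated at $u=x^2$; this produces $2a/\{\pi(a^2+x^2)\}$, precisely the Half-Cauchy$(0,a)$ density. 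This chain establishes the $\Leftarrow$ direction. For $\Rightarrow$, the same identity read in reverse shows that the hierarchical model is a valid data augmentation whose $X$-marginal is Half-Cauchy, so the two descriptions define the same law for $X$, which yields the equivalence; concretely, given $X\sim$ Half-Cauchy one recovers the conditional of $b$ from the computed joint via Bayes' rule, and it is well defined.

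I expect the argument to be entirely routine, so there is no deep obstacle; the only points demanding care are the Inverse-Gamma parameterization, namely correctly substituting $\beta=1/b$ and $\beta=1/a^2$ into the factor $e^{-\beta/y}$, and the Jacobian of the $X^2\to X$ transformation, since a dropped factor there would misstate the final normalizing constant and break the match with the Half-Cauchy density.
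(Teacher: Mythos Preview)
Your derivation is correct: the joint density, the $b$-integral via the substitution $s=1/b$, and the Jacobian in the $U\to X$ change of variables all check out, yielding the Half-Cauchy$(0,a)$ density $2a/\{\pi(a^2+x^2)\}$ as claimed.

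Regarding comparison with the paper: the paper does not actually prove this proposition. It is stated as a cited result from \citet{10.1214/11-BA631} and then immediately used to derive the full conditionals for the Gibbs sampler; no argument is given in the paper itself. So your proposal supplies a self-contained proof where the paper relies entirely on the external reference. What your approach buys is that a reader need not consult the cited source to verify the scale-mixture representation underpinning the horseshoe Gibbs steps; what the paper's approach buys is brevity, since the result is standard in the Bayesian shrinkage literature and the paper's focus is on applying it rather than establishing it.
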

This proposition implies that a half-Cauchy distribution is equivalent to a hierarchical form of an inverse gamma distribution with an auxiliary variable.

\section{Synthetic Weights}
By using auxiliary variables, we can derive the full conditional distributions as follows:
\begin{align*}
  \bm{\alpha} \mid \text{rest}    & \sim \mathcal{N}_{N}\big(A^{-1}(\bm{Y}^{c}(0))^{\top}\bm{Y}_{0}(0),\ \sigma^{2}_{1}\bm{A}^{-1}\big),                                                                                                                    \\
  \text{where}\ \ \bm{A}          & = (\bm{Y}^{c}(0))^{\top}\bm{Y}^{c}(0) + \sigma^{2}_{1}\mathrm{diag}(1/\lambda^{2}_{1},1/\lambda^{2}_{2},\cdots,1/\lambda^{2}_{N})\in\mathbb{R}^{N\times N}                                                              \\
  \lambda^{2}_{i}\mid\text{rest}  & \sim \text{Inverse-Gamma}\qty(1,\ \dfrac{\alpha_{i}^{2}}{2} + \dfrac{1}{\nu_{\lambda_{i}}}),\ \ \text{for}\ i=1,2,\cdots,N                                                                                                  \\
  \nu_{\lambda_{i}}\mid\text{rest}    & \sim \text{Inverse-Gamma}\qty(1,\ \dfrac{1}{\lambda^{2}_{i}}+\dfrac{1}{\tau^{2}})                                                                                                                         \\
  \tau^{2}\mid\text{rest}         & \sim \text{Inverse-Gamma}\qty(\dfrac{N+1}{2},\ \sum_{i=1}^{N}\dfrac{1}{\nu_{\lambda_{i}}} + \dfrac{1}{\nu_{\tau}})                                                                                                                                     \\
  \nu_{\tau}\mid\text{rest}       & \sim \text{Inverse-Gamma}\qty(1,\ \dfrac{1}{\tau^{2}}+\dfrac{1}{\sigma_{1}^{2}})                                                                                                                                        \\
  \sigma^{2}_{1}\mid\text{rest}   & \sim \text{Inverse-Gamma}\qty(1+\dfrac{T_{0}}{2},\ \dfrac{1}{\nu_{\tau}}+\dfrac{1}{\nu_{\sigma_{1}}}+\dfrac{1}{2}(\bm{Y}_{0}(0)-\bm{\alpha}^{\top}\bm{Y}^{c}(0))^{\top}(\bm{Y}_{0}(0)-\bm{\alpha}^{\top}\bm{Y}^{c}(0))) \\
  \nu_{\sigma_{1}}\mid\text{rest} & \sim \text{Inverse-Gamma}\qty(1,\ \dfrac{1}{\sigma_{1}^{2}}+\dfrac{1}{10^{2}})
\end{align*}

\section{Spatial Autoregressive Panel Data Model}
For the remaining parameters, we specify priors as follows.
\begin{enumerate}
  \item Full conditionals of $\bm{\beta}$:
  \begin{align*}
          \bm{\beta} \mid \text{rest}            & \sim \mathcal{N}_{k}(\bm{A}_{\beta}^{-1}\tilde{u},\ \sigma^{2}_{2}\bm{A}_{\beta}^{-1})                                                   \\
          \text{where}\ \ \bm{A}_{\beta}         & =\bm{X}^{\top}\bm{X} + \sigma^{2}_{2}\mathrm{diag}(1/\lambda_{\beta_{1}}^{2},1/\lambda_{\beta_{2}}^{2},\cdots,1/\lambda_{\beta_{k}}^{2}) \\
          \lambda_{\beta_{j}}^{2}\mid\text{rest} & \sim \text{Inverse-Gamma}\qty(1,\ \dfrac{\beta_{j}^{2}}{2}+\dfrac{1}{\nu_{\lambda_{\beta_{0}}}}),\ \ \text{for}\ j=1,2,\cdots,k          \\
          \nu_{\lambda_{\beta}}\mid\text{rest}   & \sim \text{Inverse-Gamma}\qty(1,\ \dfrac{1}{\tau^{2}_{\beta}} + \sum_{j=1}^{k}\dfrac{1}{\lambda^{2}_{\beta_{j}}})                        \\
          \tau^{2}_{\beta} \mid\text{rest}       & \sim \text{Inverse-Gamma}\qty(1,\ \dfrac{1}{\nu_{\lambda_{\beta}}} + \dfrac{1}{\nu_{\tau_{\beta}}})                                      \\
          \nu_{\tau_{\beta}}\mid\text{rest}      & \sim \text{Inverse-Gamma}\qty(1,\ \dfrac{1}{\tau^{2}_{\beta}} + \dfrac{1}{\sigma_{2}^{2}});
        \end{align*}
  \item Full conditionals of $\phi_{\gamma}$ and $\sigma^{2}_{\gamma}$:
  \begin{align*}
          \phi_{\gamma}\mid\text{rest}       & \sim N\qty(\dfrac{\sum_{t=1}^{T_{0}}\bm{\gamma}_{t-1}^{\top}\bm{\gamma}_{t}}{\sum_{t=1}^{T_{0}}\bm{\gamma}_{t-1}^{\top}\bm{\gamma}_{t-1}},\ \dfrac{\sigma^{2}_{\gamma}}{\sum_{t=1}^{T_{0}}\bm{\gamma}_{t-1}^{\top}\bm{\gamma}_{t-1}}),   \\
          \sigma^{2}_{\gamma}\mid\text{rest} & \sim \text{Inverse-Gamma}\qty(\dfrac{1}{2}+\dfrac{pT_{0}}{2},\ \dfrac{1}{\nu_{\sigma_{\gamma}}} + \dfrac{1}{2}\sum_{t=1}^{T_{0}}(\bm{\gamma}_{t}-\phi_{\gamma}\bm{\gamma}_{t-1})^{\top}(\bm{\gamma}_{t}-\phi_{\gamma}\bm{\gamma}_{t-1})), \\
          \nu_{\sigma}\mid\text{rest}        & \sim \text{Inverse-Gamma}\qty(1,\ \dfrac{1}{\sigma^{2}_{\gamma}}+\dfrac{1}{10^{2}}).
        \end{align*}
        After drawing samples from the above full conditional distributions, we update $\bm{\gamma}_{t}$ using the forward filtering backward sampling method.
  \item Full conditionals of $\bm{\eta}$, $\sigma^{2}_{\eta}$ and $\omega_{1}^{2},\cdots,\omega_{p}^{2}$:
  \begin{align*}
          \underset{p\times 1}{\bm{\eta}_{i}} \mid\text{rest} & \sim \mathcal{N}_{p}\qty(\bm{C}^{-1}\sum_{t=1}^{T_{0}}\bm{\gamma}_{t}(\bm{Y}^{c}(0)-u_{it}),\ \sigma^{2}_{2}\bm{C}^{-1}),\ \ \text{for}\ i=1,2,\cdots,N                                   \\
          \text{where}\ \ \bm{C}                              & = \sum_{t=1}^{T_{0}}\bm{\gamma}_{t}\bm{\gamma}_{t}^{\top} + \sigma^{2}_{2}\sigma^{-2}_{\eta}\bm{\Sigma}_{\eta}^{-1},\ \ \bm{\Sigma}_{\eta}=\mathrm{diag}(\omega_{1}^{2},\cdots,\omega_{p}^{2}),                                                                           \\
          \text{and}\ \ u_{it}                                & = \rho w_{i} Y_{0t}(0)+\rho\sum_{j=1}^{N}W_{ij}Y_{jt}(0) + \bm{X}_{it}\bm{\beta}_{0}                                                                                                             \\
          \sigma^{2}_{\eta}\mid\text{rest}                    & \sim \text{Inverse-Gamma}\qty(\dfrac{1}{2}+\dfrac{pN}{2},\ \dfrac{1}{\nu_{\sigma_{\eta}}}+\dfrac{1}{2}\sum_{i=1}^{N}\bm{\eta}_{i}^{\top}\bm{\Sigma}_{\eta}^{-1}\bm{\eta}_{i})               \\
          \omega_{j}^{2}\mid\text{rest}                       & \sim \text{Inverse-Gamma}\qty(\dfrac{1}{2}+\dfrac{N}{2},\ \dfrac{1}{\nu_{\omega_{j}}} + \dfrac{1}{2}\sum_{i=1}^{N}\dfrac{(\eta_{ij})^{2}}{\sigma^{2}_{\eta}}),\ \ \text{for}\ j=1,2,\cdots,p \\
          \nu_{\omega_{j}}\mid\text{rest}                     & \sim \text{Inverse-Gamma}\qty(1,\ \dfrac{1}{\omega_{j}^{2}}+\dfrac{1}{10^{2}}),\ \ \text{for}\ j=1,2,\cdots,p.
        \end{align*}
  \item Full conditionals of $\sigma^{2}_{2}$:
  \begin{align*}
          \sigma^{2}_{2}\mid\text{rest}   & \sim \text{Inverse-Gamma}\qty(1 + \dfrac{NT_{0}}{2},\ \dfrac{1}{\nu_{\beta}} + \dfrac{1}{\nu_{\sigma_{2}}} +\sum_{t=1}^{T_{0}}\bm{u}_{t}^{\top}\bm{u}_{t}) \\
          \text{where}\ \ \bm{u}_{t}      & = \bm{Y}_{t}^{c}(0)-\rho\bm{w}Y_{0t}(0)-\rho\bm{W}\bm{Y}_{t}^{c}(0)-\bm{X}_{t}\bm{\beta}_{0}-\bm{\eta}^{0}\bm{\gamma}_{t}                                  \\
          \nu_{\sigma_{2}}\mid\text{rest} & \sim \text{Inverse-Gamma}\qty(1,\ \dfrac{1}{\sigma^{2}_{2}} + \dfrac{1}{10^{2}}).
        \end{align*}
  \item Regarding $\rho$, because we cannot analytically derive the full conditional distribution, we use the Metropolis algorithm:\begin{enumerate}
          \item Let $\rho^{(c)}$ be the current value. Generate $\rho^{*}$ from $N(\rho^{(c)},k^{2})$.
          \item Using the following, we compute likelihood $p(\rho^{c}\mid\text{rest})$ and $p(\rho^{*}\mid\text{rest})$, respectively:\begin{align*}
                  &p(\rho\mid\text{rest}) \\
                  &= \prod_{t=1}^{T_{0}}|\bm{A}|\exp(-\dfrac{1}{2\sigma^{2}}\big(\bm{A}\bm{Y}_{t}^{c}-\bm{X}_{t}\bm{\beta}-\bm{\eta}^{0}\bm{\gamma}_{t}\big)^{\top}\big(\bm{A}\bm{Y}_{t}^{c}-\bm{X}_{t}\bm{\beta}-\bm{\eta}^{0}\bm{\gamma}_{t}\big)),
                \end{align*}
                where $\bm{A}=\bm{I}_{N}-\rho\bm{W} - \rho\bm{w}\bm{\alpha}^{\top}$.
          \item Let $r=\min\{1,p(\rho^{*}\mid\text{rest})/p(\rho^{(c)}\mid\text{rest})\}$ be the acceptance probability. Subsequently, we generate $u\sim U(0,1)$ and update $\rho^{(c)}$ as $\rho^{*}$ if $r>u$, otherwise stay $\rho^{(c)}$.
        \end{enumerate}
        This algorithm is referred from \citet{lesage2009introduction}.
\end{enumerate}

\section{MCMC Validation and Diagnostics}
\label{app:validation}

\subsection{Geweke (2004) Joint Distribution Test (JDT)}
\label{app:jdt}

To validate the correctness of our MCMC implementation, we conducted the joint distribution test (JDT) proposed by \citet{geweke2004getting}.  
The purpose of the JDT is to verify that an MCMC sampler is free of implementation errors by checking whether it reproduces the correct joint distribution \( p(\theta, y) \).

The test compares two simulators:

\begin{itemize}
    \item \textit{Marginal–Conditional (MC) simulator:}  
    draws parameters independently from the prior distribution and then generates synthetic data directly from the model.
    \item \textit{Successive–Conditional (SC) simulator:}  
    uses the implemented MCMC algorithm to generate draws of both parameters and data sequentially.
\end{itemize}

Let \( h(\theta, y) \) denote any scalar function of the parameters and data.  
If the sampler is correctly implemented, then for each statistic,
\[
Z = 
\frac{\bar{h}_{SC} - \bar{h}_{MC}}
{\sqrt{\mathrm{Var}(h_{SC})/n_{SC} + \mathrm{Var}(h_{MC})/n_{MC}}}
\]
should asymptotically follow the standard normal distribution.

Because our estimation consists of two steps, we applied the JDT separately to (i) the BSCM sampler used to draw the synthetic control weights \( \alpha \), and (ii) the SAR sampler for the autoregressive parameter \( \rho \). As shown in Table~\ref{tab:jdt_summary}, all resulting \( Z \)-statistics lie well within conventional acceptance regions, indicating that our MCMC implementation correctly reproduces the intended joint distribution.

\begin{table}[H]

\caption{Summary of Joint Distribution Test Results}
\label{tab:jdt_summary}
\centering
\begin{tabular}[t]{lrrrrrr}
\toprule
Statistic $g$ & Mean (iid) & Mean (MCMC) & SE (iid) & SE (MCMC) & $Z$ & $p$-value\\
\midrule
$\rho$ & 0.0001542 & -0.0160376 & 0.0003454 & 0.0095749 & 1.6899527 & 0.0910370\\
$\log(\sigma^2)$ & 0.5758153 & 0.5812187 & 0.0009061 & 0.0147590 & -0.3654197 & 0.7147982\\
$\bar{y}_c$ & -0.0000785 & 0.0074088 & 0.0007557 & 0.0062590 & -1.1876386 & 0.2349758\\
$\log(\mathrm{Var}(y_c))$ & 2.3945838 & 2.4063212 & 0.0008891 & 0.0169414 & -0.6918719 & 0.4890178\\
$h_{\text{spatial}}$ & 10.0261618 & 5.8165006 & 4.7311009 & 10.3394342 & 0.3702281 & 0.7112126\\
\addlinespace
$\mathrm{Corr}(y_0, Wy)$ & 0.0004321 & -0.0004719 & 0.0001928 & 0.0014144 & 0.6332193 & 0.5265905\\
$\bar{\beta}$ & -0.0005247 & 0.0034374 & 0.0005002 & 0.0060446 & -0.6532536 & 0.5135928\\
$\bar{\eta}$ & -0.0001587 & -0.0017863 & 0.0002500 & 0.0012491 & 1.2776321 & 0.2013792\\
$\bar{\Gamma}$ & -0.0002017 & -0.0009302 & 0.0001824 & 0.0006937 & 1.0156151 & 0.3098127\\
\bottomrule
\end{tabular}
\begin{tablenotes}
  \footnotesize
  \item \textit{Notes:} This table reports the results of Geweke's (2004) joint distribution test, comparing the moments generated by the marginal-conditional (MC) simulator and the successive-conditional (SC) simulator. The $Z$-statistic is expected to follow a standard normal distribution under the null hypothesis that the MCMC implementation is correct. $|Z| < 1.96$ indicates no significant difference at the 5\% level.
\end{tablenotes}
\end{table}

\subsection{MCMC Convergence Diagnostics}
\label{app:convergence}

We evaluated MCMC convergence using standard diagnostic tools, including trace plots, posterior density plots, and effective sample size (ESS) statistics for the key parameters \( \rho \) and \( \sigma^{2} \). The diagnostics uniformly indicate good mixing behavior and adequate effective sample sizes.

\subsubsection{California Tobacco Tax Application}

Figure~\ref{fig:ca_diag} presents the trace plots and posterior densities for the California application. The chains exhibit stable behavior and no signs of non-convergence.  
Table~\ref{tab:ca_diag} reports summary statistics and ESS values, all of which exceed commonly used thresholds.

\bigskip

\begin{figure}[h!]
\caption{MCMC Convergence Diagnostics (California)}
    \centering
    \includegraphics[width=1.0\textwidth]{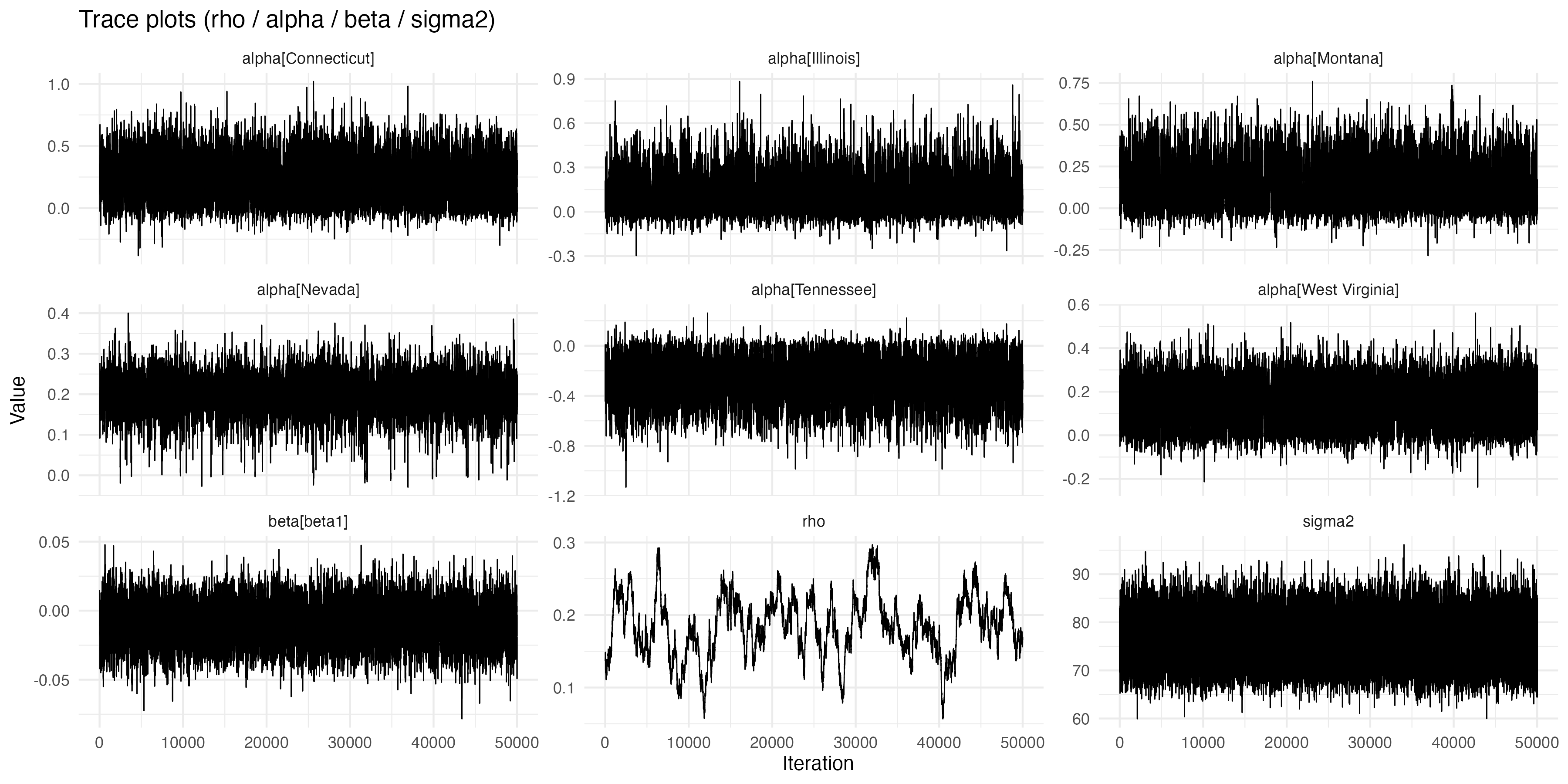}
    \label{fig:ca_diag}
    \begin{tablenotes}
  \footnotesize
  \item \textit{Notes:} The figure displays trace plots for representative parameters---$\rho$ (labeled as ``rho''), $\sigma^2$ (labeled as ``sigma2''), and selected elements of $\bm{\alpha}$ (labeled as ``alpha'') and $\bm{\beta}$ (labeled as ``beta'')---after discarding burn-in samples. The plots demonstrate stable mixing and stationarity of the Markov chains, with no evidence of divergent behavior.
\end{tablenotes}
\end{figure}

\begin{table}[H]
\centering
\caption{\label{tab:tab:ca_diag}MCMC Posterior Summary and Diagnostics (California)}
\label{tab:ca_diag}
\centering
\footnotesize
\setlength{\tabcolsep}{2pt} 
\begin{tabular}[t]{lrrrrrrr}
\toprule
Parameter & Mean & SD & $2.5\%$ & $50\%$ & $97.5\%$ & ESS & $\hat{R}$\\
\midrule
rho & 0.1847 & 0.0420 & 0.0957 & 0.1848 & 0.2677 & 43.9816 & 1.001\\
alpha[Tennessee] & -0.2580 & 0.1662 & -0.5880 & -0.2618 & 0.0091 & 1515.7641 & 1.000\\
alpha[Connecticut] & 0.2176 & 0.1654 & -0.0308 & 0.2173 & 0.5435 & 1620.1672 & 1.000\\
alpha[Nevada] & 0.1997 & 0.0366 & 0.1204 & 0.2015 & 0.2681 & 3856.4972 & 1.000\\
alpha[West Virginia] & 0.1321 & 0.0950 & -0.0178 & 0.1361 & 0.3116 & 1971.8119 & 1.001\\
\addlinespace
alpha[Montana] & 0.1183 & 0.1263 & -0.0307 & 0.0836 & 0.3993 & 1550.3598 & 1.000\\
alpha[Illinois] & 0.1115 & 0.1168 & -0.0326 & 0.0856 & 0.3893 & 1889.7454 & 1.001\\
sigma2 & 76.0961 & 4.3095 & 68.1694 & 75.9431 & 84.9599 & 7415.0698 & 1.000\\
beta[beta1] & -0.0097 & 0.0140 & -0.0370 & -0.0097 & 0.0180 & 2267.8469 & 1.000\\
\bottomrule
\end{tabular}
\begin{tablenotes}
  \footnotesize
  \item \textit{Notes:} This table summarizes the posterior distributions of key parameters. ``ESS" denotes the Effective Sample Size, which estimates the number of independent samples. "$R$" (Split $\hat{R}$) is the Gelman-Rubin convergence diagnostic; values close to 1.0 (typically $< 1.01$) indicate convergence.
\end{tablenotes}
\end{table}

\subsubsection{Sudan Split Application}

A similar inspection for the Sudan Split analysis (Figure~\ref{fig:sudan_diag} and Table~\ref{tab:sudan_diag}) confirms excellent convergence and well-mixed Markov chains.

\bigskip

\begin{figure}[h!]
\caption{MCMC Convergence Diagnostics (Sudan Split)}
    \centering
    \includegraphics[width=1.0\textwidth]{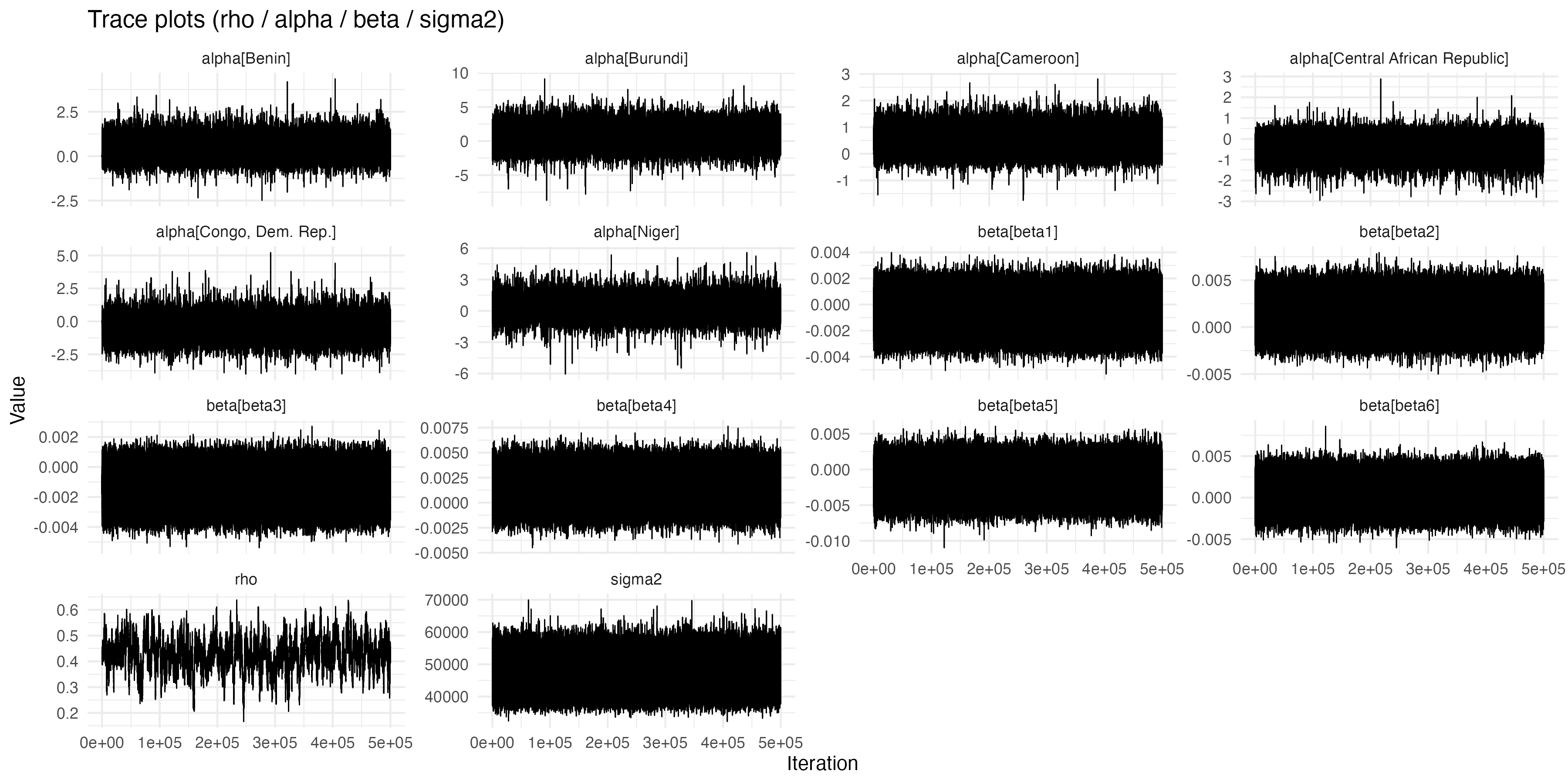}
    \label{fig:sudan_diag}
    \begin{tablenotes}
  \small
  \item \textit{Notes:} The figure displays trace plots for representative parameters---$\rho$ (labeled as ``rho''), $\sigma^2$ (labeled as ``sigma2''), and selected elements of $\bm{\alpha}$ (labeled as ``alpha'') and $\bm{\beta}$ (labeled as ``beta'')---after discarding burn-in samples. The plots demonstrate stable mixing and stationarity of the Markov chains, with no evidence of divergent behavior.
\end{tablenotes}
\end{figure}

\bigskip

\begin{table}[!h]
\centering
\caption{MCMC Posterior Summary and Diagnostics (Sudan Split)}
\label{tab:sudan_diag}
\centering
\resizebox{\ifdim\width>\linewidth\linewidth\else\width\fi}{!}{
\begin{tabular}[t]{lrrrrrrr}
\toprule
Parameter & Mean & SD & $2.5\%$ & $50\%$ & $97.5\%$ & ESS & $\hat{R}$\\
\midrule
rho & 0.4274 & 0.0642 & 0.2941 & 0.4295 & 0.5498 & 387.7632 & 1\\
alpha[Burundi] & 0.3084 & 0.8360 & -1.0837 & 0.0897 & 2.4226 & 41314.1005 & 1\\
alpha[Cameroon] & 0.2646 & 0.3380 & -0.1046 & 0.1183 & 1.0550 & 18294.9308 & 1\\
alpha[Benin] & 0.1913 & 0.3343 & -0.2246 & 0.0678 & 1.0814 & 39594.9780 & 1\\
alpha[Congo, Dem. Rep.] & -0.1668 & 0.4114 & -1.2907 & -0.0326 & 0.3997 & 37242.3548 & 1\\
\addlinespace
alpha[Central African Republic] & -0.1216 & 0.2567 & -0.8185 & -0.0369 & 0.2127 & 53376.2961 & 1\\
alpha[Niger] & 0.1184 & 0.4191 & -0.5784 & 0.0284 & 1.1822 & 84888.2934 & 1\\
sigma2 & 46351.6902 & 3717.6094 & 39616.2464 & 46153.3551 & 54205.1948 & 174641.3582 & 1\\
beta[beta1] & -0.0006 & 0.0010 & -0.0025 & -0.0006 & 0.0014 & 160118.0630 & 1\\
beta[beta2] & 0.0014 & 0.0014 & -0.0012 & 0.0014 & 0.0041 & 211738.0450 & 1\\
\addlinespace
beta[beta3] & -0.0014 & 0.0009 & -0.0031 & -0.0014 & 0.0003 & 42583.5868 & 1\\
beta[beta4] & 0.0015 & 0.0012 & -0.0009 & 0.0015 & 0.0039 & 16820.7741 & 1\\
beta[beta5] & -0.0015 & 0.0017 & -0.0048 & -0.0015 & 0.0018 & 24412.7901 & 1\\
beta[beta6] & 0.0005 & 0.0013 & -0.0021 & 0.0005 & 0.0032 & 39437.0112 & 1\\
\bottomrule
\end{tabular}}
\begin{tablenotes}
  \footnotesize
  \item \textit{Notes:} This table summarizes the posterior distributions of key parameters. ``ESS" denotes the Effective Sample Size, which estimates the number of independent samples. ``$R$" (Split $\hat{R}$) is the Gelman-Rubin convergence diagnostic; values close to 1.0 (typically $< 1.01$) indicate convergence.
\end{tablenotes}
\end{table}

\bigskip

\section{Model Specification Diagnostics}
\label{app:specification}

\subsection{Prior Predictive Analysis}
\label{app:ppa}

To evaluate whether the model and prior distributions place reasonable mass over data-generating processes consistent with the observed data, we performed a prior predictive analysis following \citet{geweke2005contemporary}.  
We generated \( 100,000 \) synthetic datasets \( Y^{c}_{sim} \) from the prior predictive distribution associated with the Step~2 SAR model.

For each simulated dataset, we computed a set of nine informative summary statistics: the sample mean, the logarithm of the variance, the spatial quadratic form, the correlation between the treated unit and the synthetic control predictor, the first- and second-order autocorrelations, the proportion of variance explained by the first principal component, the average skewness, and the average kurtosis.

Figures~\ref{fig:ca_ppa_hist} and \ref{fig:sudan_ppa_hist} show the prior predictive distributions (histograms) for each statistic, along with the value computed from observed data \( h(y^{o}) \), for the California Tobacco Tax application and the Sudan Split application, respectively.  
For each statistic, we also compute the Bayesian prior predictive \(p\)-value,
\[
P\bigl(h(Y^{c}_{sim}) \le h(y^{o}) \,\big|\, A \bigr),
\]
and report the results in Tables~\ref{tab:ca_ppa} and \ref{tab:sudan_ppa}. 
Overall, the observed statistics fall well within the support of the prior predictive distributions, suggesting that the model assumptions and priors are compatible with the observed data.


\bigskip

\begin{figure}[h!]
 \caption{Prior Predictive Analysis (California Tobacco Tax)}
    \centering
    \includegraphics[width=1.0\textwidth]{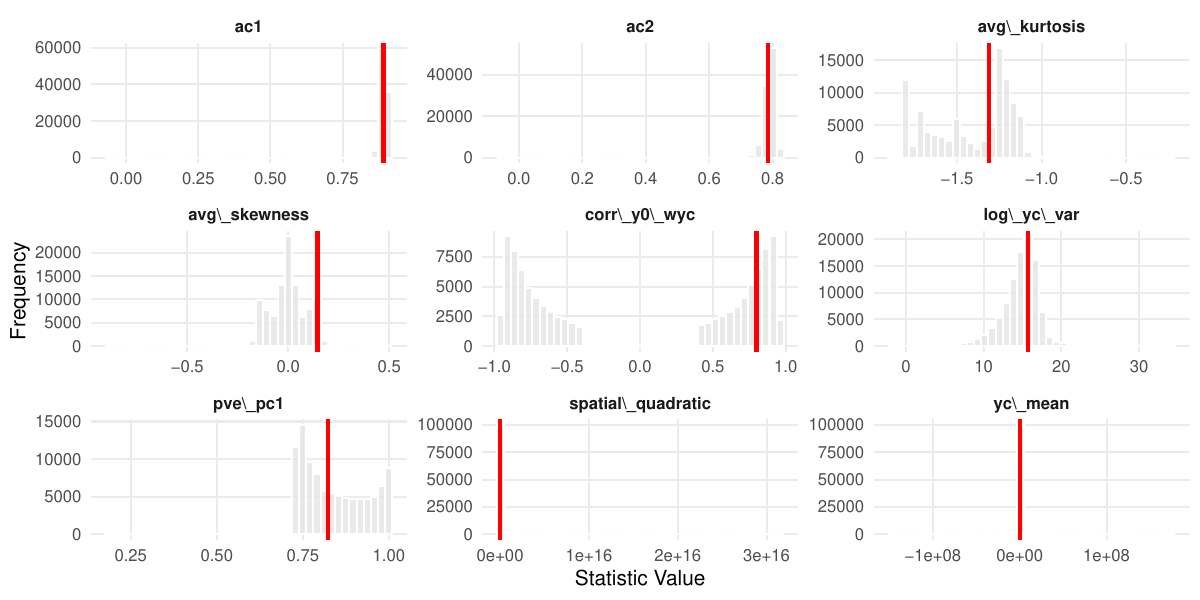}
    \label{fig:ca_ppa_hist}
    \begin{tablenotes}
  \footnotesize
  \item \textit{Notes:} Histograms show the prior predictive distributions of nine summary statistics generated from the model's prior. The vertical red lines indicate the values of the statistics calculated from the actual observed pre-treatment data.
\end{tablenotes}
\end{figure}

\begin{figure}[h!]
\caption{Prior Predictive Analysis (Sudan Split)}
    \centering
    \includegraphics[width=1.0\textwidth]{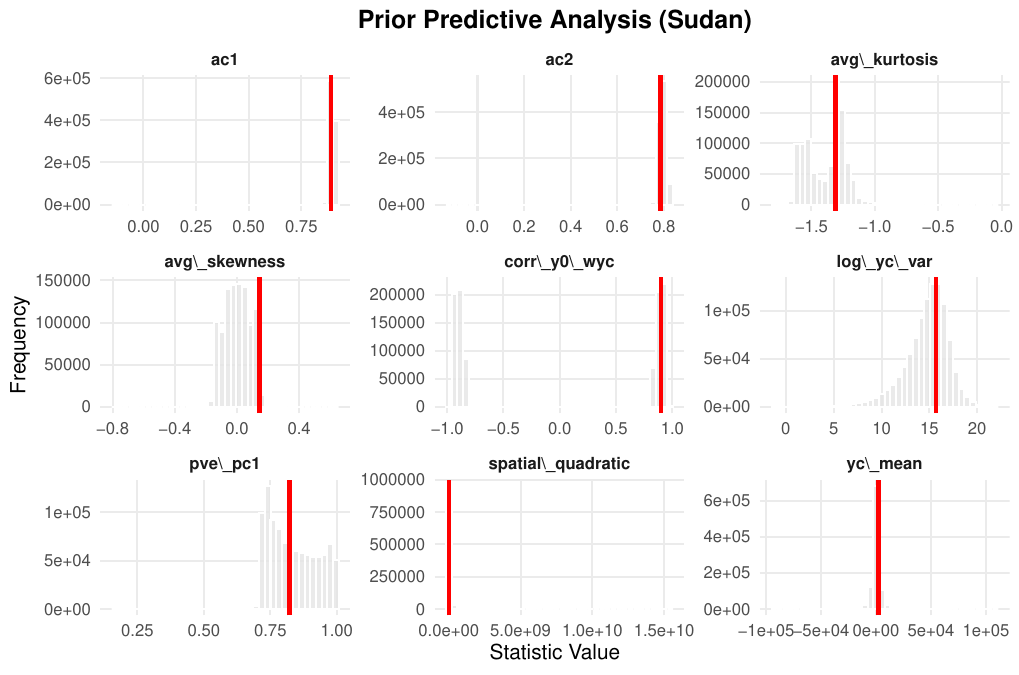}
    \label{fig:sudan_ppa_hist}
    \begin{tablenotes}
  \footnotesize
  \item \textit{Notes:} Histograms show the prior predictive distributions of nine summary statistics generated from the model's prior. The vertical red lines indicate the values of the statistics calculated from the actual observed pre-treatment data.
\end{tablenotes}
\end{figure}

\bigskip

\begin{table}[H]
\centering
\caption{Prior Predictive Analysis Summary Table (California Tobacco Tax)}
\label{tab:ca_ppa}
\centering
\begin{tabular}[t]{lrr}
\toprule
Statistic & Observed & $P(h \leq h(y^{o}) | A)$\\
\midrule
Mean & 131.500 & 0.906\\
log-Var & 6.984 & 0.681\\
Spatial Quad. & 17844.720 & 0.802\\
Corr($Y_0$, $w'Y$) & 0.945 & 0.631\\
AC(1) & 0.894 & 0.976\\
\addlinespace
AC(2) & 0.810 & 0.980\\
PVE(PC1) & 0.625 & 0.049\\
Skewness & -0.396 & 0.480\\
Kurtosis & -0.943 & 0.715\\
\bottomrule
\end{tabular}
\begin{tablenotes}
  \footnotesize
  \item \textit{Notes:} The table reports the Bayesian p-values for the prior predictive check, defined as the proportion of simulated statistics that are less than or equal to the observed statistic, $P(h(y_{sim}) \le h(y_{obs}) \mid \text{Prior})$. Values far from 0 or 1 indicate consistency between the prior/model and the data.
\end{tablenotes}
\end{table}


\bigskip

\begin{table}[!h]
\centering
\caption{Prior Predictive Analysis Summary Table (Sudan Split)}
\label{tab:sudan_ppa}
\centering
\begin{tabular}[t]{lrr}
\toprule
Statistic & Observed & $P(h \leq h(y^{o}) | A)$\\
\midrule
Mean & 2142.779 & 0.806\\
log-Var & 15.666 & 0.625\\
Spatial Quad. & 7464968.100 & 0.614\\
Corr($Y_0$, $w'Y$) & 0.799 & 0.750\\
AC(1) & 0.892 & 0.339\\
\addlinespace
AC(2) & 0.786 & 0.294\\
PVE(PC1) & 0.823 & 0.503\\
Skewness & 0.145 & 0.923\\
Kurtosis & -1.312 & 0.501\\
\bottomrule
\end{tabular}
\begin{tablenotes}
  \footnotesize
  \item \textit{Notes:} The table reports the Bayesian p-values for the prior predictive check, defined as the proportion of simulated statistics that are less than or equal to the observed statistic, $P(h(y_{sim}) \le h(y_{obs}) \mid \text{Prior})$. Values far from 0 or 1 indicate consistency between the prior/model and the data.
\end{tablenotes}
\end{table}

\subsection{Prior Sensitivity Analysis}
\label{app:sensitivity}

We investigated the sensitivity of posterior inferences to the choice of prior hyperparameters, with a focus on the priors for \( \rho \) and \( \sigma^{2} \).  
For several values of the hyperparameters \( (a_{0}, b_{0}, \rho_{\mathrm{lo}}, \rho_{\mathrm{hi}}) \), we recomputed the posterior distribution of all model parameters.

We focus on the parameters \(\rho\), \(\sigma^{2}\), and \(\beta_{1}\).
For each prior specification, we report the posterior mean and standard
deviation, together with the 2.5th and 97.5th percentiles of the posterior
distribution. The hyperparameters \((a_{0}, b_{0})\) correspond to the prior placed on \(\sigma^{2}\). 

Tables~\ref{tab:ca_sens} and \ref{tab:sudan_sens} report these quantities for the California Tobacco Tax application and the Sudan Split application, respectively. 
The posterior mean of \( \rho \) remains stable across the different prior choices, indicating that the results are not sensitive to the prior specification.


\begin{table}[H]
\centering
\caption{Prior Sensitivity Analysis (California Tobacco Tax)}
\label{tab:ca_sens}
\centering
\resizebox{\ifdim\width>\linewidth\linewidth\else\width\fi}{!}{
\begin{tabular}[t]{rrrrrlrrrr}
\toprule
$a_0$ & $b_0$ & $\rho_{\mathrm{lo}}$ & $\rho_{\mathrm{hi}}$ & $\Delta\rho$ & Parameter & Mean & SD & $2.5\%$ & $97.5\%$\\
\midrule
3 & 1.0 & -0.5 & 0.5 & 0.05 & rho & 0.4968491 & 0.0031053 & 0.4884527 & 0.4999165\\
3 & 1.0 & -0.5 & 0.5 & 0.05 & sigma2 & 1774.4801797 & 96.1709457 & 1596.2061721 & 1972.7777152\\
3 & 1.0 & -0.5 & 0.5 & 0.05 & beta[1] & 0.8648020 & 0.0250859 & 0.8159419 & 0.9143218\\
5 & 1.0 & -0.3 & 0.3 & 0.03 & rho & 0.2980091 & 0.0019977 & 0.2926308 & 0.2999507\\
5 & 1.0 & -0.3 & 0.3 & 0.03 & sigma2 & 2390.5653694 & 129.0438905 & 2150.5123794 & 2656.2601841\\
\addlinespace
5 & 1.0 & -0.3 & 0.3 & 0.03 & beta[1] & 1.2450636 & 0.0285566 & 1.1891587 & 1.3010971\\
2 & 0.5 & -0.7 & 0.7 & 0.07 & rho & 0.6539277 & 0.0192914 & 0.6147986 & 0.6900835\\
2 & 0.5 & -0.7 & 0.7 & 0.07 & sigma2 & 1511.6332085 & 84.4858128 & 1355.1972011 & 1686.1148914\\
2 & 0.5 & -0.7 & 0.7 & 0.07 & beta[1] & 0.5642521 & 0.0432292 & 0.4826236 & 0.6511700\\
\bottomrule
\end{tabular}}
\begin{tablenotes}
  \footnotesize
  \item \textit{Notes:} This table presents the posterior means, standard deviations, and 95\% credible intervals for key parameters ($\rho$, $\sigma^2$, $\beta_1$) under alternative prior hyperparameter settings. $(a_0, b_0)$ correspond to the Inverse-Gamma prior for $\sigma^2$, and $(\rho_{lo}, \rho_{hi})$ define the Uniform prior support for $\rho$. The stability of the estimates across different settings indicates robustness to prior specification.
\end{tablenotes}
\end{table}


\begin{table}[!h]
\centering
\caption{Prior Sensitivity Analysis (Sudan Split)}
\label{tab:sudan_sens}
\centering
\resizebox{\ifdim\width>\linewidth\linewidth\else\width\fi}{!}{
\begin{tabular}[t]{rrrrrlrrrr}
\toprule
a0 & b0 & rho\_lo & rho\_hi & step\_rho & param & mean & sd & q025 & q975\\
\midrule
3 & 1.0 & -0.5 & 0.5 & 0.05 & rho & 0.0000003 & 0.0000017 & -0.0000024 & 0.0000025\\
3 & 1.0 & -0.5 & 0.5 & 0.05 & sigma2 & 0.0054748 & 0.0004059 & 0.0047372 & 0.0063262\\
3 & 1.0 & -0.5 & 0.5 & 0.05 & beta[1] & 0.9999998 & 0.0000016 & 0.9999967 & 1.0000030\\
5 & 1.0 & -0.3 & 0.3 & 0.03 & rho & -0.0000003 & 0.0000014 & -0.0000025 & 0.0000030\\
5 & 1.0 & -0.3 & 0.3 & 0.03 & sigma2 & 0.0054130 & 0.0003992 & 0.0046855 & 0.0062497\\
\addlinespace
5 & 1.0 & -0.3 & 0.3 & 0.03 & beta[1] & 1.0000002 & 0.0000015 & 0.9999971 & 1.0000031\\
2 & 0.5 & -0.7 & 0.7 & 0.07 & rho & -0.0000004 & 0.0000015 & -0.0000019 & 0.0000012\\
2 & 0.5 & -0.7 & 0.7 & 0.07 & sigma2 & 0.0027566 & 0.0002052 & 0.0023831 & 0.0031867\\
2 & 0.5 & -0.7 & 0.7 & 0.07 & beta[1] & 1.0000003 & 0.0000013 & 0.9999979 & 1.0000028\\
\bottomrule
\end{tabular}}
\begin{tablenotes}
  \footnotesize
  \item \textit{Notes:} This table presents the posterior means, standard deviations, and 95\% credible intervals for key parameters ($\rho$, $\sigma^2$, $\beta_1$) under alternative prior hyperparameter settings. $(a_0, b_0)$ correspond to the Inverse-Gamma prior for $\sigma^2$, and $(\rho_{lo}, \rho_{hi})$ define the Uniform prior support for $\rho$. The stability of the estimates across different settings indicates robustness to prior specification.
\end{tablenotes}
\end{table}


\end{document}